\newcommand{\igw}[1]{}
\newcommand{\anca}[1]{}
\newtheorem{example}[theorem]{Example}
\newcommand{\CAS}{\textsf{CAS}}
\newcommand*\circled[1]{\tikz[baseline=(char.base)]{
            \node[shape=circle,draw,inner sep=1pt] (char) {\small #1};}}
\newcommand{\sgt}{\vartriangleright_\sig}
\newcommand{\run}{\mathit{run}}
\newcommand{\Ccloc}{\Cc^{\dar_{loc}}}
\newcommand{\ts}{\mathit{ts}}
\newcommand{\Cor}{\mathit{Corr}}
\newcommand{\sig}{\mathit{sig}}
\newcommand{\rep}{\mathit{rep}}
\newcommand{\hide}{\mathit{hide}}
\newcommand{\red}[1]{#1^\triangledown}
\newcommand{\short}[1]{#1^\circledS}
\newcommand{\mem}[1]{#1^m}
\newcommand{\slow}{\mathit{slow}}
\newcommand{\Sloc}{\S^{loc}}
\newcommand{\state}{\mathit{state}}
\newcommand{\loc}{\mathit{dom}}
\newcommand{\dar}{\downarrow}
\newcommand{\Ssys}{\S^{sys}}
\newcommand{\Senv}{\S^{env}}
\newcommand{\dom}{\mathit{dom}}
\newcommand{\ch}{\mathop{ch}}
\title{Distributed synthesis for acyclic architectures}
\author{Anca Muscholl\\ Universit\'e de Bordeaux and Igor Walukiewicz\\CNRS, Universit\'e de Bordeaux}%
\begin{document}

\maketitle

\begin{abstract}
  The distributed synthesis problem is about constructing correct
  distributed systems, i.e., systems that satisfy a given
  specification. We consider  a slightly more general problem of
  distributed control, where the goal is to restrict the behavior
  of a given distributed system in order to satisfy the
  specification. Our systems are finite state machines that
  communicate via rendez-vous (Zielonka automata). We show
  decidability of the synthesis problem for all $\omega$-regular local
  specifications, under the restriction that the communication graph
  of the system is acyclic. This result extends a previous
  decidability result for a restricted form of local reachability 
  specifications.

\end{abstract}

\section{Introduction}

\igw{update Intro}
Synthesizing distributed systems from specifications is an attractive
objective, since distributed systems are notoriously difficult to get
right. Unfortunately, there are very few known decidable frameworks
for distributed synthesis. We study a framework for synthesis
of open systems that is based on rendez-vous communication and causal
memory. 
In particular, causal memory implies that specifications can talk about
when a communication takes place, but cannot limit information that is
transmitted during communication. This choice is both realistic and
avoids some pathological reasons for undecidability. We show
a decidability result for acyclic communication graphs and local
$\omega$-regular specifications.

Instead of synthesis we actually work in the more general framework of
distributed control. Our setting is a direct adaptation of the
supervisory control framework of Ramadge and Wonham~\cite{RW89}. In
this framework we are given a plant (a finite automaton) where some of
the actions are uncontrollable, and a specification, and the goal is to
construct a controller (another finite automaton) such that its
product with the plant satisfies the specification. The controller is
not allowed to block uncontrollable actions, in other words, in every state there is a
transition on each uncontrollable action. The controlled plant has
less behaviors, resulting from restricting
controllable actions of the plant.  In our case the formulation is
exactly the same, but we consider Zielonka automata instead of finite
automata, as plants and controllers. Considering parallel devices, as
Zielonka automata, in the standard definition of control gives an
elegant formulation of the distributed control problem.




Zielonka automata~\cite{zie87,ms97} are by now a well-established
model of distributed computation. Such a device is an asynchronous
product of finite-state processes synchronizing on shared
actions. Asynchronicity means that processes can progress at different
speed. The synchronization on shared actions allows the synchronizing
processes to exchange information, in particular the controllers can
transfer control information with each synchronization. This model can
encode some common synchronization primitives available on modern
multi-core processors for implementing concurrent data structures, like compare-and-swap.

We show decidability of the control problem for Zielonka automata
where the communication graph is acyclic: a process can communicate
(synchronize) with its parent and its children. Our specifications are
conjunctions of $\w$-regular specifications for each of the component
processes. We allow uncontrollable communication actions -- the only
restriction is that all communication actions must be
binary. Uncontrollable communications give a big flexibility, for
instance it is possible to model asymmetric situations where
communication can be refused by one partner, but not by the other one.

Our result extends~\cite{GGMW13} that  showed decidability for a
restricted form of local reachability objectives (blocking final
states). We still get the same complexity as in~\cite{GGMW13}:
non-elementary in general, and EXPTIME for architectures of depth $1$.
Covering all $\w$-regular objectives allows to express fairness
constraints but at the same time introduces important technical
obstacles. Indeed, for our construction to work it is essential that
we enrich the framework by uncontrollable synchronization actions. 
This makes a separation into
controllable and uncontrollable states impossible. 
In consequence, we are lead to abandon the game metaphor, to
invent new arguments, and to design a new proof structure.

Most research on distributed synthesis and control has been done in
the setting proposed by Pnueli and Rosner \cite{PR90}. This setting is also based
on shared-variable communication, however it does not allow
to pass additional information between processes. So their model leads
to partial information games, and decidability of synthesis holds
only for very restricted
architectures~\cite{KV01,MadThiag01,FinSch05}. 
While specifications leading to
undecidability are very artificial, no elegant solution to eliminate
them exists at present.
The synthesis setting is investigated
in~\cite{MadThiag01} for local specifications, meaning that each
process has its own, linear-time specification. For such
specifications, it is shown that an architecture has a decidable
synthesis problem if and only if it is a sub-architecture of a
pipeline with inputs at both endpoints. 
More relaxed variants of synthesis have been proposed, where the
specification does not fully describe the communication of the
synthesized system. One approach consists in adding communication in
order to combine local knowledge, as proposed for example
in~\cite{gpq12}.  Another approach is to use specifications only for
describing external communication, as done in~\cite{gs13tocl} on
strongly connected architectures where processes communicate via
signals.

Apart from~\cite{GGMW13}, two closely related decidability results for
synthesis with causal memory are known, both of different flavor than
ours. The first one \cite{GLZ04} restricts the alphabet of actions:
control with reachability condition is decidable for co-graph
alphabets. This restriction excludes among others client-server
architectures, which are captured by our setting. The second result
\cite{MTY05} shows decidability by restricting the plant: roughly
speaking, the restriction says that every process can have only
bounded missing knowledge about the other processes, unless they
diverge (see also \cite{ms13} that shows a doubly exponential upper
bound). The proof of \cite{MTY05} goes beyond the controller synthesis
problem, by coding it into monadic second-order theory of event
structures and showing that this theory is decidable when the
criterion on the plant holds. Unfortunately, very simple plants have a
decidable control problem but undecidable MSO-theory of the associated
event structure. Game semantics and
asynchronous games played on event structures are considered in~\cite{mel06}. More recent work
\cite{gw13} considers games on event structures and shows a Borel
determinacy result for such games under certain restrictions. 

\emph{Overview.} In Section~\ref{sec:defs} we state our control
problem, and in Section~\ref{sec:reduction} we give the main lines of
the proof, that works by a reduction of the number of processes. In
Section~\ref{sec:short} we show that we may assume for the process
that is eliminated that there is a bound on the number of local actions
it can perform between consecutive synchronizations with its
parent. In Section~\ref{sec:new} we present the reduction, and in
Sections~\ref{sec:C}, \ref{sec:D} we show the correctness of the
construction.

\section{Control for Zielonka automata}\label{sec:defs}

In this section we introduce our control problem for Zielonka
automata, adapting  the definition of
supervisory control~\cite{RW89} to our model. 


A Zielonka automaton~\cite{zie87,ms97} 
is a simple
distributed finite-state devices. Such an automaton is a parallel
composition of several finite automata, called~\emph{processes},
synchronizing on shared actions. There is no global clock, so between
two synchronizations, two processes can do a different number of
actions. Because of this, Zielonka automata are also called
asynchronous automata.

A \emph{distributed action alphabet} on a finite set $\PP$ of processes is a
pair $(\S,\loc)$, where $\S$ is a finite set of \emph{actions} and
$\loc:\S \to (2^{\PP}\setminus \es)$ is a \emph{location
  function}. The location $\loc(a)$ of action $a \in\S$ comprises all
processes  that need to synchronize in order to perform this
action. Actions from $\S_p=\set{a \in\S \mid p
\in\loc(a)}$ are called \emph{$p$-actions}. 
We write $\Sloc_p=\set{a \mid \dom(a)=\set{p}}$
for the set of~\emph{local} actions of $p$.

A (deterministic) \emph{Zielonka automaton}
$\Aa=\struct{\set{S_p}_{p\in \PP},s_{in},\set{\d_a}_{a\in\S}}$ is
given by:
\begin{itemize}
\item for every process $p$ a finite set $S_p$ of (local) states,
\item the initial state $s_{in} \in \prod_{p \in \PP} S_p$, 
\item for every action $a \in\S$ a partial transition function
  $\d_a:\prod_{p\in \loc(a)}S_p\stackrel{\cdot}{\to} \prod_{p\in
    \loc(a)}S_p$ on tuples of states of processes in $\loc(a)$.
\end{itemize}

\begin{example}\label{ex:cas}
  Boolean multi-threaded programs with shared
  variables can be modeled as Zielonka automata. As an example we
  describe the translation for the \emph{compare-and-swap} (CAS)
  instruction. This instruction has $3$ parameters: \textsf{CAS}($x$:
  variable; \emph{old}, \emph{new}: int). Its effect is to return the
  value of $x$ and at the same time set the value of $x$ to
  \emph{new}, but only if the previous value of $x$ was equal to
  \emph{old}. The compare-and-swap operation is a  widely used primitive
  in implementations of concurrent data structures, and has
  hardware support in most contemporary multiprocessor
  architectures.

  Suppose that we have a thread $t$, and a shared variable $x$ that is
  accessed by a CAS operation in $t$ via $y:=\CAS_x(i,k)$. So $y$ is a
  local variable of $t$. In the Zielonka automaton we will have one
  process modeling thread $t$ and one process for variable $x$. The
  states of $t$ will be valuations of local variables. The states of
  $x$ will be the values $x$ can take. The $\CAS$ instruction above
  becomes a synchronization action. We have the following two types of
  transitions on this action:
  \begin{center}
    \includegraphics[scale=.4]{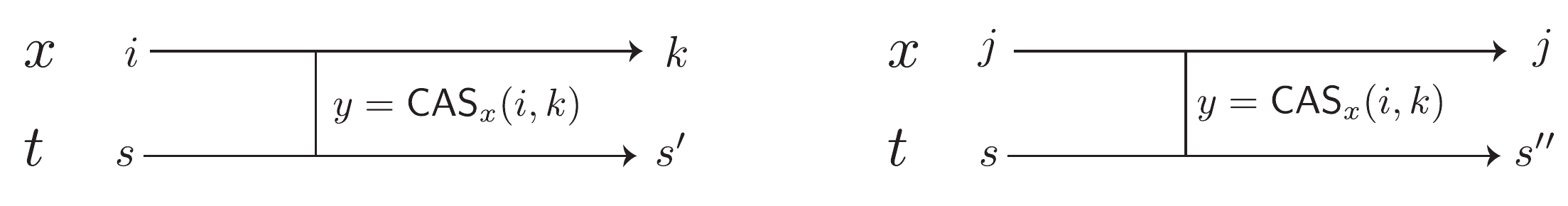}
  \end{center}
Notice that in state $s'$, we have $y=i$, whereas in $s''$, we have $y=j$.
\end{example}

For convenience, we abbreviate a tuple $(s_p)_{p \in P}$ of local
states by  $s_P$,  where $P \subseteq \PP$. We also talk about $S_p$
as the set of \emph{$p$-states}.

A Zielonka automaton can be seen as a sequential automaton with the
state set $S = \prod_{p\in\PP} S_p$ and transitions $s \act{a} s'$ if
$(s_{\loc(a)}, s'_{\loc(a)}) \in \d_a$, and $s_{\PP\setminus
  \loc(a)}=s'_{\PP\setminus \loc(a)}$. So the states of this automaton
are the tuples of states of the processes of the Zielonka
automaton. For a process $p$ we will talk about the
\emph{$p$-component} of the state.  A run of $\Aa$ is a finite or
infinite sequence of transitions starting in $s_{in}$. Since the
automaton is deterministic, a run is determined by the sequence of
labels of the transitions. We will write $\run(u)$ for the run
determined by the sequence $u\in \S^\infty$. Observe that $\run(u)$
may be undefined since the transition function of $\Aa$ is partial. We
will also talk about the projection of the run on component $p$,
denoted $\run_p(u)$, that is the projection on component $p$ of the
subsequence of the run containing the transitions involving $p$. We
will assume that every local state of $\Aa$ occurs in some run. For
finite $w$ let $\state(w)$ be the last state in $\run(w)$. By
$\loc(u)$ we denote the union of $\loc(a)$, for all $a \in\S$
occurring in $u$.

We will be interested in maximal runs of Zielonka
automata. For parallel devices the notion of a maximal run is not that
evident, as one may want to impose some fairness conditions. We settle here
for a minimal sensible fairness requirement. It says that a run is
maximal if processes that have only finitely many actions in the run
cannot perform any additional action.

\begin{definition}[Maximal run]
For a word $w \in\S^\infty$ such that $\run(w)$ is
defined, we say that $\run(w)$ is
\emph{maximal} if there is no 
decomposition $w=uv$, and no action
$a \in \S$ such that $\loc(v) \cap \loc(a)=\es$ and $\run(uav)$ is defined. 
\end{definition}


Automata can be equipped with a \emph{correctness condition}. We prefer
to talk about correctness condition rather than acceptance condition
since we will be interested in the set of runs of an automaton rather
than in the set of words it accepts. We will consider local
regular correctness conditions: every process has its own correctness
condition $\Cor_p$. A run of $\Aa$ is
\emph{correct} if for every process $p$, the projection of the run on
the transitions of $\Aa_p$ is in $\Cor_p$. Condition
$\Cor_p$ is 
specified by a set $T_p \incl {S_p}$ of terminal states and an
$\omega$-regular set $\Omega_p \subseteq (S_p \times \S_p \times
S_p)^\omega$. 
A sequence $(s^0_p,a_0,s^1_p) (s^1_p,a_1,s^2_p) \dots$ satisfies
$\Cor_p$ if either: (i) 
it is finite and ends with a state from $T_p$, or (ii)
it is infinite and belongs to $\Omega_p$.
At this stage the set of terminal states $T_p$ may look unnecessary, but
it will simplify our constructions later. 

Finally, we will need the notion of \emph{synchronized product}
$\Aa\times\Cc$ of two Zielonka automata. For
$\Aa=\struct{\set{S_p}_{p\in \PP},s_{in},\set{\d^A_a}_{a\in\S}}$ and
$\Cc=\struct{\set{C_p}_{p\in \PP},c_{in},\set{\d^C_a}_{a\in\S}}$ let
$\Aa \times \Cc=\struct{\set{S_p\times C_p}_{p\in
    \PP},(s_{in},c_{in}),\set{\d^\times_a)_{a\in\S}}}$ where there is
a transition from $(s_{\loc(a)},c_{\loc(a)})$ to
$(s'_{\loc(a)},c'_{\loc(a)})$ in $\d^\times_a$ iff $(s_{\loc(a)},
s'_{\loc(a)}) \in \d^A_a$ and $(c_{\loc(a)}, c'_{\loc(a)}) \in \d^C_a$.

\medskip

To define the control problem for Zielonka automata we fix a
distributed alphabet $\struct{\PP,\loc:\S\to (2^\PP \setminus
  \es)}$. We partition $\S$ into the set of \emph{system actions} $\Ssys$ and
\emph{environment actions} $\Senv$. Below we will introduce the notion of
controller, and require that it does not block environment
actions. For this reason we speak about
\emph{controllable/uncontrollable} actions when referring to system/environment
actions. We impose three simplifying assumptions: (1) All actions are
at most binary ($|\loc(a)|\le 2$ for every $a \in\S$); (2) every
process has some controllable action; (3) 
all controllable actions are local. 
Among the three conditions only the first one is indeed a restriction of
our setting. The other two are not true limitations, in particular controllable
shared actions can be simulated by a local controllable choice, followed by 
non-controllable local or shared actions (see
Proposition~\ref{prop:comm_controllable}).

\begin{definition}[Controller, Correct Controller]
  A \emph{controller} is a Zielonka automaton that cannot block
  environment (uncontrollable) actions. In other words, from every
state every environment action is possible: for every
  $b\in\Senv$, $\d_b$ is a total function.  We say that a controller
  $\Cc$ \emph{is correct for} a plant $\Aa$ if all maximal runs of $\Aa\times
  \Cc$ satisfy the correctness condition of $\Aa$.
\end{definition}

Recall that an action is possible in  $\Aa\times
  \Cc$ iff it is possible in both $\Aa$ and $\Cc$. By the above
  definition, environment actions
  are always possible in $\Cc$. The major difference between the
  controlled system $\Aa \times \Cc$ and
and $\Aa$ is that the states of $\Aa \times \Cc$ carry  the
additional information computed by $\Cc$, and that $\Aa \times
  \Cc$ may have less behaviors, resulting from disallowing controllable actions
  by $\Cc$.

The correctness of $\Cc$ means that all the runs of $\Aa$ that are
\emph{allowed} by $\Cc$ are correct. In particular, $\Cc$ does not have
a correctness condition by itself. Considering only maximal runs of
$\Aa \times \Cc$ imposes some minimal fairness conditions: for example
it implies that if a process can do a local action almost
always, then it will eventually do some action. 

\begin{definition}[Control problem]
  Given a distributed alphabet $\struct{\PP,\loc:\S\to (2^\PP \setminus
    \es)}$ together with a partition of actions $(\Ssys,\Senv)$, and
  given a Zielonka automaton $\Aa$ over this alphabet, 
 find a controller $\Cc$ over the same alphabet such that $\Cc$ is
  correct for $\Aa$.
\end{definition}

The important point in our definition is that the controller has
the same distributed alphabet as the automaton it controls, in other
words the controller is  not allowed to introduce additional
synchronizations between processes.  

\begin{example}\label{ex:control}
  We give an example showing how causal memory works and helps to
  construct controllers. Consider an automaton $\Aa$ with $3$ processes:
  $p$, $q$, $r$. We would like to control it so that the only two
  possible runs of $\Aa$ are the following:
  \begin{center}
    \includegraphics[scale=.6]{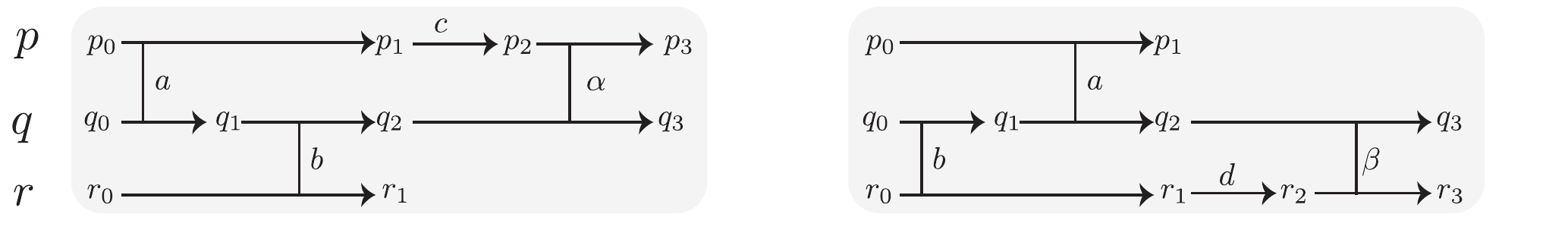}
  \end{center}
 So $p$ and $q$ should synchronize on $\a$ when action $a$ happened
 before $b$, otherwise $q$ and $r$ should synchronize on
 $\b$. Communication actions are uncontrollable, but the
 transitions of $\Aa$ are such that there are local controllable
 actions $c$ and $d$ that enable communication on $\a$ and $\b$
 respectively. So the controller should block either $c$ or $d$
 depending on the order between $a$ and $b$. The transitions of $\Aa$
 are as follows
 \begin{align*}
   \d_a(p_0,q_0)=& (p_1,q_1) & \d_a(p_0,q_1)=& (p_1,q_2) &
\d_b(q_0,r_0)=&(q_1,r_1) & \d_b(q_1,r_0)=&(q_2,r_1)\\
\d_c(p_1)=&p_2 & \d_\a(p_2,q_2)=& (p_3,q_3) & \d_d(r_1)=&r_2  & \d_\b(q_2,r_2)=&(q_3,r_3)
 \end{align*}
 These transitions allow the two behaviors depicted above but also
 two unwanted ones, as say, when $a$ happens before $b$ and then we
 see $\b$.  Clearly, the specification of the desired behaviors can be
 formulated as a local 
 condition on $q$. So by encoding some information in states of $q$
 this condition can be expressed by a set of terminal states $T_q$. We
 will not do this for readability.

 The controller $\Cc$ for $\Aa$ will mimic the structure of $\Aa$: for
 every state of $\Aa$ there will be in $\Cc$ a state with over-line. So,
 for example, the states of $q$ in $\Cc$ will be $\bar q_0, \dots,
 \bar q_3$. Moreover $\Cc$ will have two new states $\underline{p_1}$
 and $\underline{r_1}$. The transitions will be
 \begin{align*}
   \d_a(\bar p_0, \bar q_0)=& (\bar p_1,\bar q_1) & 
   \d_a(\bar p_0,\bar q_1)=& (\underline{p_1},\bar q_2) &
\d_b(\bar q_0,\bar r_0)=&(\bar q_1,\bar r_1) & 
\d_b(\bar q_1,\bar r_0)=&(\bar q_2,\underline{r_1})\\
\d_c(\bar p_1)=&\bar p_2 & \d_c(\underline{p_1})=&\bot & \d_d(\bar r_1)=&\bar
r_2 & \d_d(\underline{r_1})=&\bot\\
\d_\a(\bar p_2,\bar q_2)=& (\bar p_3,\bar q_3) & && \d_\b(\bar q_2,\bar
r_2)=&(\bar q_3,\bar r_3)
 \end{align*}
Observe that $c$ is blocked in $\underline{p_1}$, and so is $d$ from $\underline{r_1}$. It is easy to verify that the runs of $\Aa\times \Cc$ are as
required, so $\Cc$ is a correct controller for $\Aa$. (Actually the definition of a controller forces us to make
 transitions of $\Cc$ total on uncontrollable actions. We can do it in
 arbitrary way as this will not add new behaviors to $\Aa\times\Cc$.)

This example shows several phenomena. The states of $\Cc$ are the
states of $\Aa$ coupled with some additional information. 
We formalize this later under a notion of covering controller.
We could also see above a case where a communication is decided by one
of the parties. Processes $p$, thanks to a local action,
can decide if it wants to communicate via $\a$, but process $q$ has
to accept $\a$ always. This shows the flexibility given by
uncontrollable communication actions. Finally, we could see
information passing during communication. In $\Cc$ process $q$
passes to $p$ and $r$ information about its local state (transitions
on $a$ and on $b$).
\end{example}

We end the section by showing the assumption that controllable
actions are local, is not a restriction.

\begin{proposition}\label{prop:comm_controllable}
  The control problem for Zielonka automata where communication
  actions may be controllable, reduces to the setting where
  controllable actions are all local.
\end{proposition}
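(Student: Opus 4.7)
The plan is to replace each shared controllable action $a$ with $\loc(a)=\{p,q\}$ by a short two-step protocol that uses only local controllable actions and one uncontrollable shared action. Concretely, I introduce fresh local controllable actions $a_p\in\Sloc_p$ and $a_q\in\Sloc_q$ together with a new uncontrollable binary action $\bar a$ with $\loc(\bar a)=\{p,q\}$. For each $p$-state $s_p$ from which $a$ was originally enabled I add a companion ``committed'' state $s_p^a$ with $\d_{a_p}(s_p)=s_p^a$; symmetrically at $q$. The shared transition is $\d_{\bar a}(s_p^a,s_q^a)=\d_a(s_p,s_q)$, and from $s_p^a$ process $p$ can perform only $\bar a$ (so it simply waits until $q$ also commits). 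The local correctness conditions $\Cor_p$ are lifted so that a terminal wait in $s_p^a$ is treated as a terminal stay at $s_p$, and so that the $\omega$-regular component $\Omega_p$ ignores the added ``propose'' letters. This is why having a separate set $T_p$ of terminal states in the definition of $\Cor_p$ is technically convenient.

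Given a correct controller $\Cc$ for $\Aa$ I construct a controller $\Cc'$ for $\Aa'$ by taking $\Cc'_p$ to carry the same local state as $\Cc_p$ and, whenever the original $\Cc$ would enable $a$, to fire $a_p$ locally. Since $\bar a$ is uncontrollable it is executed in $\Aa'\times\Cc'$ exactly when both sides have committed, so the projection of runs of $\Aa'\times\Cc'$ onto the original alphabet coincides with the runs of $\Aa\times\Cc$, and correctness is preserved. Conversely, given a correct controller $\Cc'$ for $\Aa'$, I collapse the protocol: declare $\d_a^{\Cc}(c_p,c_q)$ defined iff both $\Cc'_p$ would fire $a_p$ at $c_p$ and $\Cc'_q$ would fire $a_q$ at $c_q$, with the joint target mirroring the unique post-$\bar a$ state. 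Both directions use the observation that intermediate committed configurations have no local actions other than $\bar a$, so no spurious interleavings are introduced.

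The main obstacle is the forward direction: $\d_a^{\Cc}$ is an arbitrary partial function of $(c_p,c_q)$ and need not factorize as a conjunction of local predicates on $c_p$ and on $c_q$. I address this by first normalizing $\Cc$ so that its $a$-decision \emph{does} factorize, then applying the construction above. The normalization exploits causal memory: the pair $(c_p,c_q)$ reached when $a$ is about to fire depends only on the joint causal past of $p$ and $q$, which was last shared at some previous synchronization $b$ of $p$ and $q$. One can enrich the states of $\Cc$ so that at $b$ both processes already record a consistent \emph{plan} for the next $a$, keyed by the local traces that $p$ and $q$ may perform between $b$ and $a$; each side then follows its portion of the plan locally. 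This yields an equivalent controller whose $a$-decision is a product of local predicates, which is exactly the regime the simulation by $a_p,a_q,\bar a$ can handle.

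A few smaller points round out the proof. Unary controllable actions are already local, so nothing is required for them. For the $\omega$-regular correctness conditions one checks that the correspondence between maximal runs of $\Aa\times\Cc$ and of $\Aa'\times\Cc'$ preserves both the finite-with-terminal-state and the infinite-$\omega$-regular cases; the committed-state convention was chosen precisely for this purpose. Finally, the transformation is polynomial in the size of $\Aa$ and of the (given) local correctness automata, so the reduction is effective and does not blow up the complexity of the underlying control problem.
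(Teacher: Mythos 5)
There is a genuine gap, and it lies in the design of your committed states. You replace a shared controllable action $a$ by a local commit $a_p$ leading to a state $s_p^a$ from which \emph{only} $\bar a$ is enabled, so a process that proposes $a$ irrevocably renounces every other option until its partner agrees. A controller for $\Aa$, however, may have to offer several controllable actions simultaneously from one and the same local state --- say a shared $a$ with $q$ together with a local fallback $b$, or shared actions with two different neighbours --- precisely because $p$ cannot know whether its partner will ever reach a state from which $a$ can fire. Such a controller has no counterpart over your $\Aa'$: if $\Cc'_p$ fires $a_p$ it may get stuck forever in $s_p^a$ when $q$ never commits (a new maximal run, violating $\Cor_p$ unless $s_p$ happens to be terminal), and if it refrains from firing $a_p$ it has blocked $a$ altogether. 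So your claim that the projection of the runs of $\Aa'\times\Cc'$ onto the original alphabet coincides with the runs of $\Aa\times\Cc$ fails, and the left-to-right direction of the reduction is not established. The paper avoids exactly this by letting a process commit to a \emph{set} $A\subseteq\Ssys_p$ of controllable actions via a single local action $\ch(A)$, and by keeping every action of $A$ enabled from the committed state $\struct{s_p,A}$: nothing is renounced, and simultaneous offers survive the translation. To salvage your version you would need a normal-form lemma saying that a correct controller never has to offer a shared controllable action together with anything else; this is nowhere argued in your proposal.

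Your third paragraph does identify a real subtlety --- the decision of $\Cc$ to allow $a$ is a function of the pair $(c_p,c_q)$, while a propose-and-handshake encoding can only realize decisions of product form --- but the proposed repair is only a sketch. ``Recording a consistent plan at the last synchronization, keyed by the local traces performed since'' does not by itself show that a correct controller with a non-factorizing $a$-decision can be replaced by a correct one whose decision factorizes: the existential factorization (allow $a$ at $(c_p,c_q)$ iff each side allows it with \emph{some} partner) adds behaviours and may destroy correctness, while the universal one removes behaviours and may destroy reachability of terminal states. Note also that this issue is orthogonal to the previous one: it concerns the two endpoints of a single action $a$, whereas the set-versus-singleton problem concerns the choice among several controllable actions at one endpoint. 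The paper does not prove a factorization lemma either; it works with covering controllers and simply keeps, on the now-uncontrollable action $a$, the pairwise $a$-transitions of $\Cc$ relocated to the committed states, so your route would in any case be a substantially different (and currently incomplete) argument.
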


\begin{proof}
  We start with an automaton $\Aa$ over a distributed alphabet
  $\struct{\S,\loc}$ and a correct covering controller $\Cc$.  We
  define first a new automaton $\Aa'$ over an extended distributed
  alphabet $\struct{\S',\loc'}$ with $\S'=\S \cup \set{\ch(A) \mid A
    \subseteq \Ssys_p \text{ for some } p \in \PP}$. All new actions
  are local: $\loc'(\ch(A))=\set{p}$ if $A \subseteq \Ssys_p$; the
  domain of other actions do not change. What changes is that all old
  actions become uncontrollable, and the only controllable actions in
  $\S'$ are those of the form $\ch(A)$.

  \begin{itemize}
  \item The set of $p$-states of $\Aa'$ is the set of $p$-states of $\Aa$,
    plus some new states of the form $\struct{s_p,A}$ where $s_p$ is a
    $p$-state of $\Aa$ and $A \subseteq \Ssys_p$. 
\item For every old $p$-state $s_p$ we delete all outgoing
  controllable transitions and add
  \[s_p \act{\ch(A)} \struct{s_p,A}\,,\] for every set $A$ of
  controllable actions enabled in $s_p$. From $\struct{s_p,A}$ we put in $\Aa'$ 
  transitions as follows. If $a \in A$ is local then we
  have $\struct{s_p,A} \act{a} s'_p$ whenever
  $s_p \act{a} s'_p$ in $\Aa$. If $a \in A \cap B$ and
  $\loc(a)=\set{p,p'}$ then we have $(\struct{s_p,A},\struct{s_{p'},B})
  \act{a} (s'_p,s'_{p'})$ whenever $(s_p,s_{p'}) \act{a}
  (s'_p,s'_{p'})$ in $\Aa$.
\item The correctness condition of $\Aa'$ is a straightforward
  modification of the one of $\Aa$. 

  \end{itemize}

  Assume first that $\Cc'$ is a correct covering controller for
  $\Aa'$. From $\Cc'$ we define the automaton $\Cc$ over the same sets
  of states, by modifying slightly the transitions as follows. Suppose
  that $c \act{\ch(A)} d$ is a (local) transition in $\Cc'_p$. Since
  $\Cc'$ is covering we have a transition of the form $s_p=\pi'(c)
  \act{\ch(A)} \pi'(d)=\struct{s_p,A}$ in $\Aa'$. Let $a \in A$ be
  local. Since $a$ is uncontrollable in $\Aa'$ and $\struct{s_p,A}
  \act{a} s'_p$ (for some $s'_p$) we must also have $d \act{a} e$ for
  some state $e$ of $\Cc'_p$, since $\Cc'$ is covering. We delete $c
  \act{\ch(A)} d$ from $\Cc'$ and replace $d \act{a} e$ by $c \act{a}
  e$. If $a$ is shared by $p,p'$, let us consider some transition $c'
  \act{\ch(B)} d'$ with $a \in B$ in $\Cc'_{p'}$. Since $a$ is
  uncontrollable in $\Aa'$ we find again some transition
  $(d,d')\act{a} (e,e')$ in $\Cc'$. We replace then $(d,d')\act{a}
  (e,e')$ by $(c,c') \act{a} (e,e')$ in $\Cc$. Of course, this is done
  in parallel for all transitions labeled by some $\ch(A)$. It is
  immediate that $\Cc$ is covering $\Aa$, by taking
  $\pi=\pi'$. Maximal runs of $\Cc$ map to maximal runs of $\Cc'$ and
  thus satisfy the correctness condition for $\Aa$.

Conversely, given a correct covering controller $\Cc$ for $\Aa$ we
define $\Cc'$ for $\Aa'$. Local $p$-states of $\Cc'$ are
those of $\Cc$, plus additional states of the form $c_A$, where $c$ is
a $p$-state of $\Cc$ and $A \subseteq \Ssys_p$. Consider any $p$-state
$c$ of $\Cc$, and let $A$ be the set of controllable actions enabled
in $c$ (a communication action $a$ with $\loc(a)=\set{p,p'}$ is enabled
in $c$ if there exists some $p'$-state $c'$ and an $a$-transition from
$(c,c')$). We replace all controllable transitions from $c$ by one
(local) controllable transition $c \act{\ch(A)} c_A$, plus some uncontrollable
transitions. If $a\in A$ is local, then we add the 
uncontrollable transitions $c_A \act{a} d$ whenever  $c
\act{a} d$ in $\Cc$. If $\loc(a)=\set{p,p'}$, $(c,c') \act{a}
(d,d')$ in $\Cc$, and $B$ is the set of controllable actions
enabled in the $p'$-state $c'$,  then we replace $(c,c') \act{a}
(d,d')$ by $(c_A,c'_B) \act{a} (d,d')$. Extending $\pi$ by $\pi'(c_A)=\struct{\pi(c),A}$
shows that $\Cc'$ is a covering controller for $\Aa'$. Maximal runs of $\Cc'$ satisfy the
acceptance condition as for $\Cc$.
\end{proof}

\section{Decidability  for acyclic architectures}\label{sec:reduction}
In this section we present the main result of the paper.  We show the
decidability of the control problem for Zielonka automata with acyclic
architecture. A \emph{communication architecture} of a distributed
alphabet is a graph where nodes are processes and edges link processes
that have common actions. An \emph{acyclic architecture} is one whose
communication graph is acyclic\igw{changed}. For example, the communication graph
of the alphabet from the example on
page~\pageref{ex:control}\igw{reference} is a tree with the root $q$
and two successors, $p$ and $r$.

\begin{theorem}\label{thm:main}
  The control problem for Zielonka automata over distributed alphabets
  with acyclic architecture is decidable. If a controller exists, then
  it can be effectively constructed.
\end{theorem}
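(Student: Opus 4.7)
The plan is to prove the theorem by induction on the number of processes, using the acyclic structure of the communication graph. At each inductive step I pick a leaf process $\ell$ with its unique parent $p$ and construct a new control problem $\Aa'$ over the architecture with $\ell$ removed, designed so that $\Aa$ admits a correct controller if and only if $\Aa'$ does. Iterating the reduction brings us to the single-process case, which is just standard supervisory control with an $\omega$-regular objective and hence decidable.

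The intuitive idea of one elimination step is to fold $\ell$ into its parent. Since all interaction of $\ell$ with the rest of the system passes through the single edge to $p$, what the other processes see of $\ell$ is determined entirely by the sequence of $p$-$\ell$ synchronizations. I would enlarge the state space of $p$ with information summarizing, between two consecutive synchronizations, the local state reached by $\ell$ together with enough prediction data to decide whether $\Cor_\ell$ will ultimately be satisfied; this prediction is then folded into a new $\omega$-regular condition at $p$. The resulting plant $\Aa'$ over $\PP\setminus\set{\ell}$ still has an acyclic architecture, so the induction hypothesis applies.

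The first obstacle, handled in Section~\ref{sec:short}, is that $\ell$ may perform unboundedly many local actions between two synchronizations with $p$, so a priori the summary that $p$ must carry is infinite. The plan is to prove a ``short witness'' lemma: if any correct controller exists, then one exists in which the number of local actions of $\ell$ between consecutive $p$-$\ell$ communications is bounded by a computable function of the plant and the specification. A pumping argument on a suitably chosen finite state summary of $\ell$ should suffice to replace long idle stretches by shorter ones without disturbing the global behaviour.

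The main technical obstacle is the correctness of the reduction itself (Sections~\ref{sec:new},~\ref{sec:C} and~\ref{sec:D}). The easier direction projects a correct controller for $\Aa$ onto $\Aa'$ by erasing $\ell$ and recording its summaries. The harder direction takes a correct controller $\Cc'$ for $\Aa'$ and reconstructs from it a correct controller for $\Aa$, in particular synthesizing a local strategy for $\ell$ consistent with $\Cc'$'s predictions. The classical game-theoretic framing breaks down here because, as stressed in the introduction, uncontrollable synchronization actions can be initiated by either side, so states cannot be partitioned cleanly into controllable and uncontrollable; the reconstructed controller for $\ell$ must remain defined along every branch of the causal-memory unfolding and must never block an uncontrollable move, and one must show that maximal runs of the reconstructed product are still maximal so that the $\omega$-regular parts of $\Cor_\ell$ and of the enriched condition at $p$ really line up with the predictions encoded in the summaries. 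This alignment of local $\omega$-regular objectives across the two problems is what I expect to be the technical heart of the proof; once it is in place, the effective construction of a controller and the complexity bounds advertised in the introduction follow by tracking the blow-up of the enriched correctness condition through each elimination step.
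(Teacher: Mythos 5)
Your plan follows the paper's proof essentially step for step: induction on $|\PP|$ with the single-process base case handled by classical supervisory control for $\omega$-regular objectives, elimination of a leaf $r$ into its parent $q$ by storing in $q$'s state the current $r$-state together with an $r$-local strategy and folding $\Cor_r$ into $q$'s condition, preceded by the reduction to $r$-short plants (Theorem~\ref{thm:short}) and followed by the two directions of Theorem~\ref{thm:correctness}. The only places where your sketch is more optimistic than the actual argument are the short-witness lemma, which in the paper is not a plain pumping argument but requires constructing an $r$-memoryless controller via parity-game signatures (Lemma~\ref{lemma-r-memoryless}), and the reduced automaton itself, where one must make all simulated $r$-actions uncontrollable and schedule $q$'s controllable choices \emph{before} them to prevent the controller of $\red\Aa$ from extracting information unavailable to the controller of $\Aa$.
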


The remaining of this section is devoted to the outline of the proof
of Theorem~\ref{thm:main}. This proof works by induction on the number
$|\PP|$ of processes in the automaton. A Zielonka automaton over a
single process is just a finite automaton, and the control problem is
then just the standard control problem as considered by Ramadge and
Wonham but extended to all $\w$-regular conditions~\cite{AVW02}. If
there are several processes that do not communicate, then we can solve
the problem for each process separately.

 \begin{figure}[htb]
\centerline{ \includegraphics[scale=.4]{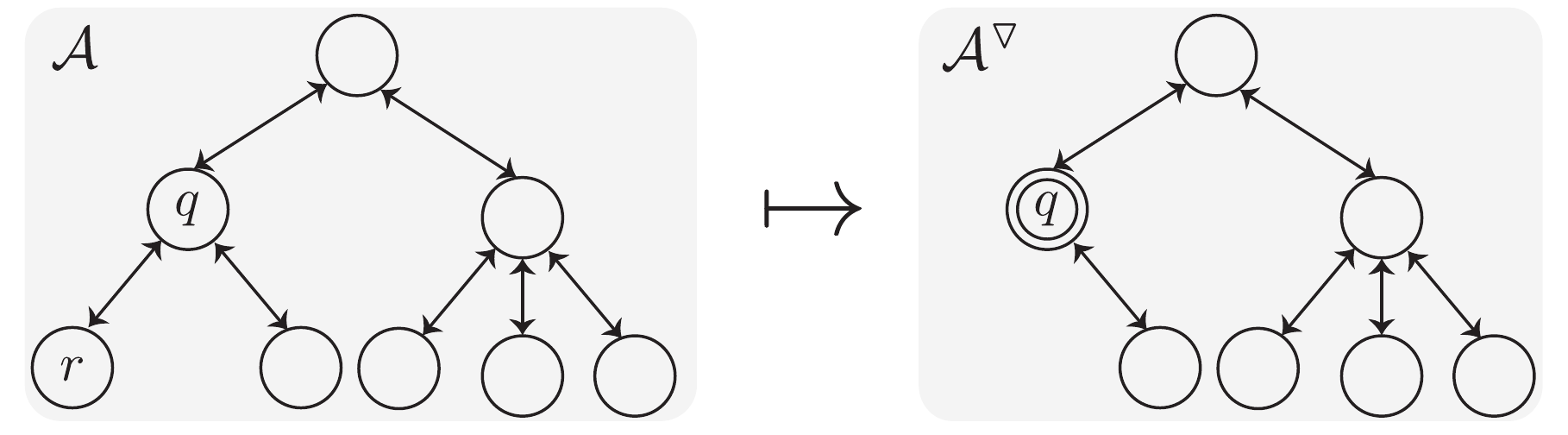}}
\caption{Eliminating process $r$: $r$ is glued with $q$.}   
   \label{fig:schema}
 \end{figure}

Otherwise we choose a leaf process
$r$ and its parent $q$, and construct a new plant $\red\Aa$ over $\PP
\setminus \set{r}$.We will show that the control
problem for $\Aa$ 
has a solution iff the one for $\red\Aa$ does. Moreover, for
every solution for $\red\Aa$ we will be able to construct a solution
for $\Aa$. 

For the rest of this section let us fix the distributed alphabet
$\struct{\PP,\loc:\S\to (2^\PP \setminus \es)}$, 
the leaf process $r$ and
its parent $q$, and  a Zielonka automaton with a correctness
condition $\Aa=\struct{\set{S_p}_{p\in \PP},s_{in},\set{\d_a}_{a\in\S},
\set{\Cor_p}_{p\in \PP}}$.

The first step in proving Theorem~\ref{thm:main} is to simplify the
problem. First,  we can restrict to controllers of a special form
called covering controllers. Next, we show that the component of $\Aa$
to be eliminated, that is $\Aa_r$, can be assumed to have a particular property
($r$-short). After these preparatory results we will be able to present
the reduction of $\Aa$ to $\red \Aa$ (Section~\ref{sec:new}).

\subsection{Covering controllers}

The notion of a covering controller will simplify the presentation
because it will allow us to focus on the runs of the controller
instead of a product of the plant and the controller.\igw{refer to the
  example}

\begin{definition}[Covering controller]~\label{df:covering controller}
  Let $\Cc$ be a Zielonka automaton over the same alphabet as $\Aa$;
  let $C_p$ be the set of states of process $p$ in $\Cc$. 
  Automaton $\Cc$ is a \emph{covering controller} for $\Aa$ if
  there is a function $\pi :\set{C_p}_{p\in \PP} \to \set{S_p}_{p\in
    \PP}$, mapping each $C_p$ to $S_p$ and satisfying two conditions:
  (i)  if $c_{\loc(b)} \act{b} c'_{\loc(b)}$ then
  $\pi(c_{\loc(b)})\act{b}\pi(c'_{\loc(b)})$; (ii) for
  every uncontrollable action $a$: if $a$ is enabled from
  $\pi(c_{\loc(a)})$ then it is also enabled from $c_{\loc(a)}$.
\end{definition}

\begin{remark} 
Strictly speaking, a covering controller $\Cc$ may not be a
controller since we  do not require that every uncontrollable action
is enabled in every state, but only those actions that are enabled
in $\Aa$. From $\Cc$ one can get a controller $\hat{\Cc}$ by adding
self-loops for all missing uncontrollable transitions. 

Notice that thanks to the projection $\pi$, a covering controller can
inherit the correctness condition of $\Aa$. Moreover, the sequences
labeling the maximal runs
of $\Cc$, $\Aa \times \Cc$ and $\Aa \times \hat{\Cc}$ are the same.
\end{remark}

\begin{lemma}\label{lemma:covering}
  There is a correct controller for $\Aa$ if and only if there is a
  covering controller $\Cc$ for $\Aa$ such that all the maximal runs
  of $\Cc$ satisfy the inherited correctness condition.
\end{lemma}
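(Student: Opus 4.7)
The plan is to prove the two implications by two explicit, essentially syntactic constructions, and then verify that they preserve the notion of a maximal run, so that correctness is transferred in both directions.

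For the direction ($\Rightarrow$), starting from a correct controller $\Cc$ for $\Aa$, I would take the synchronized product $\Cc' = \Aa \times \Cc$ as a candidate covering controller, with $\pi$ defined as the projection of each local state $(s_p,c_p)$ of $\Cc'$ onto its first component $s_p$. Condition (i) of Definition~\ref{df:covering controller} is then immediate from the definition of the synchronized product. For condition (ii), given an uncontrollable action $a$ enabled from $\pi(c'_{\loc(a)}) = s_{\loc(a)}$ in $\Aa$, the fact that $\Cc$ is a controller means $a$ is also enabled from the corresponding state of $\Cc$, and hence from the product state of $\Cc'$. Finally, maximal runs of $\Cc' = \Aa \times \Cc$ are in label-preserving bijection with maximal runs of $\Aa \times \Cc$, which by assumption satisfy the correctness condition of $\Aa$; via $\pi$ these are exactly the runs of $\Cc'$ satisfying the inherited condition.

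For the direction ($\Leftarrow$), starting from a covering controller $\Cc$ whose maximal runs satisfy the inherited correctness condition, I would invoke the construction mentioned in the remark: define $\hat{\Cc}$ from $\Cc$ by adding, in every local state $c_p$ of $\Cc_p$, a self-loop for every uncontrollable action $b \in \Senv_p$ that is not already enabled at $c_p$. By construction $\hat{\Cc}$ is a controller. The key verification is that the maximal runs of $\Aa \times \hat{\Cc}$ have the same label sequences as the maximal runs of $\Cc$ (equivalently, as those of $\Aa \times \Cc$): the added self-loops correspond, by condition (ii) of the covering property, to uncontrollable actions that are not enabled in $\Aa$ from the $\pi$-image, so they can never be fired in the product with $\Aa$. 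Thus via $\pi$ every maximal run of $\Aa \times \hat{\Cc}$ satisfies the correctness condition of $\Aa$, showing that $\hat{\Cc}$ is a correct controller.

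The main obstacle is the handling of maximal runs on the $(\Leftarrow)$ side: one has to check carefully that adding the self-loops in $\hat{\Cc}$ does not create new maximal runs in $\Aa \times \hat{\Cc}$ and does not spoil existing ones. This requires showing, using condition (ii) of covering, that whenever a self-loop on $b$ is added at a state $c$ of $\Cc_{\loc(b)}$, the action $b$ is not enabled at $\pi(c)$ in $\Aa$, so the self-loop is blocked in the product; conversely, every action available in $\Aa \times \hat{\Cc}$ is already available in $\Aa \times \Cc$, and the notion of maximality (no action $a$ with $\loc(v) \cap \loc(a) = \es$ is enabled in some tail) transfers unchanged between the two products.
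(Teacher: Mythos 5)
Your proposal is correct and follows essentially the same route as the paper: the paper's (one-line) proof likewise takes $\Aa\times\Cc$ with the first-component projection for the forward direction, and the self-loop completion $\hat{\Cc}$ for the converse, relying on the remark that the maximal runs of $\Cc$, $\Aa\times\Cc$ and $\Aa\times\hat{\Cc}$ carry the same label sequences. Your more detailed verification of condition (ii) and of the preservation of maximality is exactly the content the paper leaves implicit.
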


\begin{proof}
  If $\Cc$ is a covering
controller for $\Aa$ such that all its 
  maximal runs satisfy the inherited correctness condition then
  $\hat{\Cc}$ is a correct controller for $\Aa$. 
Conversely, if $\Cc$ is a correct controller for $\Aa$ then
  $\Aa\times\Cc$ is a covering controller where all maximal runs
  satisfy the inherited correctness condition.
\end{proof}

We will refer to a covering controller with the
property that  all its
maximal runs satisfy the inherited correctness condition, as 
\emph{correct covering controller}.

\subsection{Short automata}\label{sec:short}

In this section we justify our restriction to plants $\Aa$ where the
$r$-component $\Aa_r$ is short (see Definition~\ref{def:r-short}
below).  Recall that we have assumed that all controllable actions are
local and that we consider a tree architecture with a leaf process $r$
and its parent $q$.

\begin{definition}[$r$-short]\label{def:r-short}
 Automaton $\Aa$ is \emph{$r$-short} if there
  is a bound on the number of actions that $r$ can perform without
  doing a communication with $q$. 
\end{definition}

\begin{theorem}\label{thm:short}
  For every automaton $\Aa$, we can construct an $r$-short automaton
  $\short\Aa$ such that there is a correct controller for $\Aa$ iff
  there is one for $\short\Aa$.
\end{theorem}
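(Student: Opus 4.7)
The plan is to eliminate the $\omega$-regular condition $\Omega_r$ by passing to the product with a deterministic parity automaton, and to exploit the fact that between two consecutive synchronizations of $r$ with its parent $q$ the leaf process $r$ receives no information from the rest of the system. As a consequence, each such segment is governed by a stand-alone local parity game on the states of $r$, which can be summarized in bounded size.

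First I would fix a deterministic parity automaton $\mathcal{B}_r = (Q_r, q_0^r, \delta_r, c)$ for $\Omega_r$ and set $N := |S_r| \cdot |Q_r|$. Using positional determinacy of parity games, precompute the set $W \subseteq S_r \times Q_r$ of positions from which the controller has a positional winning strategy in the local parity game on $\Aa_r \times \mathcal{B}_r$, with the objective that the resulting infinite $r$-local run belongs to $\Omega_r$. The construction of $\short\Aa$ leaves all components other than $r$ unchanged and defines $\short\Aa_r$ as follows. Its states carry a pair in $S_r \times Q_r$ together with a counter bounded by $N$ that tracks the number of local $r$-actions performed since the last $q$-synchronization (and is reset at each such synchronization). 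Local $r$-transitions lift those of $\Aa_r$, updating the $\mathcal{B}_r$-component and incrementing the counter; they become disabled once the counter reaches $N$. In addition, I introduce one fresh controllable local action $\mathit{commit}$, enabled precisely at states whose projection lies in $W$; taking $\mathit{commit}$ drives $r$ into a fresh sink state $\bot_r$, and the inherited condition $\Cor_r$ declares the run to satisfy $\Omega_r$ from the moment $\mathit{commit}$ is taken. The resulting $\short\Aa$ is clearly $r$-short, with bound $N+1$.

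For the equivalence of the two control problems, I argue through covering controllers (Lemma~\ref{lemma:covering}). From a correct covering controller $\Cc$ for $\Aa$, build $\short\Cc$ for $\short\Aa$ by keeping all non-$r$ components identical and by shortening $\Cc_r$'s behavior inside every segment of length exceeding $N$. Within a segment, $r$ receives no information from outside, so $\Cc_r$'s local strategy can be assumed positional on $\Aa_r \times \mathcal{B}_r$; since more than $N$ local actions have occurred, some state of $S_r \times Q_r$ must repeat. Either the positionalized strategy eventually leads to a $q$-synchronization, in which case I loop-cut the segment and reroute $\short\Cc_r$ through the shorter path to the same pre-synchronization state, or it produces an infinite local play, in which case the starting state of the repeated loop lies in $W$ and $\short\Cc_r$ simply plays $\mathit{commit}$ there. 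Conversely, from a correct covering controller $\short\Cc$ for $\short\Aa$, build $\Cc$ for $\Aa$ by dropping the counter bookkeeping on non-commitment transitions and by replacing each $\mathit{commit}$ at a state projecting to $(s,q) \in W$ with the corresponding positional winning strategy on $\Aa_r \times \mathcal{B}_r$; the resulting infinite $r$-local play satisfies $\Omega_r$ by the definition of $W$.

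The main technical obstacle is the forward direction: one must show that replacing long segments of $\Cc_r$ by loop-cuts or by $\mathit{commit}$ yields a controller whose maximal runs still satisfy the overall correctness condition. The key ingredient is that each segment is a stand-alone parity game on $\Aa_r \times \mathcal{B}_r$, since $r$ receives no causal information during it, so positional determinacy applies; hence the information delivered to $q$ at the next synchronization depends only on the $r$-state at that moment, which is preserved by the loop-cut, and $\mathit{commit}$ is a sound substitute for any infinite locally accepting play originating from a state in $W$.
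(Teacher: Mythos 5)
Your overall architecture --- determinize $\Omega_r$ into a parity condition, bound the local segments of $r$ by detecting state repetitions, and invoke positional determinacy of parity games to justify the truncation --- matches the paper's strategy at a high level. But the $\mathit{commit}$ mechanism introduces a genuine gap, and it breaks the right-to-left direction. Once $\short\Cc$ plays $\mathit{commit}$, process $r$ sits in a sink and no further $q$--$r$ synchronization exists in $\short\Aa$; consequently a correct controller for $\short\Aa$ carries no information about what should happen after such a synchronization. In $\Aa$, however, all $q$--$r$ communications are uncontrollable, so the controller $\Cc$ you build must enable any such synchronization that the plant offers while $r$ executes the substituted positional winning strategy, and the environment may take it. The resulting maximal runs of $\Aa\times\Cc$, in which $q$ receives a communication from $r$ after the commit point and then continues (possibly propagating information to other processes), have no counterpart in $\short\Aa\times\short\Cc$, so nothing guarantees that they satisfy the correctness conditions of $q$ or of the remaining processes. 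The set $W$ cannot repair this: membership in $W$ only speaks about infinite purely local plays, and whether a synchronization is enabled depends on $q$'s state, which $r$ does not see. The paper avoids exactly this trap: its short automaton sends a closed loop to a marker state $\top$ or $\bot$, but the controller it reconstructs for $\Aa$ never enters a sink --- on closing an even loop it truncates its memory back to the first occurrence of the repeated state and keeps simulating the given controller for $\short\Aa$ from there, so every subsequent uncontrollable synchronization is still covered by one of its transitions.

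A secondary weakness lies in the forward direction. You treat a segment strategy as if it determined a single play (``either it leads to a $q$-synchronization or it produces an infinite local play''), but from one state a single local strategy admits both finite plays (ending at a synchronization or at a deadlock, where $T_r$ matters) and infinite ones, with the environment choosing among them; so ``play $\mathit{commit}$ there'' is not a well-defined modification of the controller. Moreover, the claim that the local strategy ``can be assumed positional on $\Aa_r\times\mathcal{B}_r$'' is not an off-the-shelf application of positional determinacy: the segment strategy must simultaneously win the local parity objective on infinite plays, end in $T_r$ on deadlocking maximal plays, and preserve the plant states at which synchronizations can occur, since these determine the global future. This is precisely what the paper's Lemma~\ref{lemma-r-memoryless} establishes, via signatures and representatives chosen inside each strongly connected component and each fibre of the covering map $\pi$, so that the plant projection of every reachable controller state is preserved. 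That argument is the technical heart of the theorem and is missing from your proposal.
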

The rest of this subsection is devoted to the proof of the above theorem.
Theorem~\ref{thm:short} bears some resemblance with the fact that
every parity game can be transformed into a finite game: when a loop
is closed the winner is decided looking at the ranks on the loop. This
construction would do if $r$ had no interaction with $q$. Possible
interactions with $q$ make the construction more involved. Moreover,
need to prove existence of some kind of memoryless strategies for
distributed controllers.

Observe that we can make two simplifying assumptions. First, we assume
that the correctness condition on $r$ is a parity condition. That is,
it is given by a rank function $\W_r:S_r\to \Nat$ and the set of
terminal states $T_r$. We can assume this since every regular language
of infinite sequences can be recognized by a deterministic parity
automaton. 
The second simplification is to assume that the automaton $\Aa$ is
\emph{$r$-aware} with respect to the parity condition on $r$. This
means that the state of $r$ determines the biggest rank that
has been seen since the last communication of $r$ with $q$. It is easy
to transform an automaton to an $r$-aware one.

Recall that if $\Cc$ is a covering controller for $\Aa$ (cf.\
Definition~\ref{df:covering controller}) then
there is a function $\pi :\set{C_p}_{p\in
  \PP} \to \set{S_p}_{p\in \PP}$, mapping each $C_p$ to $S_p$ and respecting
the transition relation: if $c_{\loc(b)} \act{b} c'_{\loc(b)}$ then
$\pi(c_{\loc(b)})\act{b}\pi(c'_{\loc(b)})$. 

\begin{definition}[$r$-memoryless controller]
  A covering controller $\Cc$ for $\Aa$ is \emph{$r$-memoryless} when
  for every pair of states $c_r\not=c'_r$ of $\Cc_r$: 
if there
  is a path on local $r$-actions from $c_r$ to $c'_r$ then
  $\pi(c_r)\not=\pi(c'_r)$.
\end{definition}

Intuitively, a controller can be seen as a strategy, and $r$-memoryless
means that it does not allow the controlled automaton to go twice
through the same $r$-state between two consecutive communication
actions of $r$ and $q$.

\begin{lemma}\label{lemma-r-memoryless}
  Fix an $r$-aware automaton $\Aa$ with a parity correctness condition
  for process $r$. 
  If there is a correct controller for $\Aa$ then there is
  also one that is covering and $r$-memoryless.
\end{lemma}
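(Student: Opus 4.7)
The plan is to start from a correct covering controller $\Cc$ (which exists by Lemma~\ref{lemma:covering}) and reshape $\Cc_r$ so that, between consecutive $r$-$q$ communications, its local behavior is given by a positional strategy in a parity game on $S_r$. Since $\Aa$ is $r$-aware, the $\Aa_r$-state records the maximal rank seen since the last communication, so the parity condition $\W_r$ is effectively a parity condition on $\Aa_r$-states, which is what makes a positional reformulation possible.

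I would decompose every run of $\Cc$ into $r$-\emph{rounds}: maximal stretches of local $r$-actions between two consecutive $r$-$q$ communications. For each round-starting state $s^0_r \in S_r$ reachable in $\Cc$, define a two-player parity game $G_{s^0_r}$ on $S_r$: at each position the controller chooses a subset of locally controllable $r$-actions to enable and the environment picks some enabled local action; a play ends when an $r$-$q$ communication would fire, and otherwise continues forever. Finite plays are won by the controller iff the exit state is one from which $\Cc$'s own round starting at $s^0_r$ would communicate with $q$; infinite plays are won iff $\W_r$ holds. The strategy induced by $\Cc_r$ witnesses that the controller wins from every reachable $s^0_r$, so by positional determinacy of parity games there is a single positional strategy $\sigma: S_r \to 2^{\S}$ that wins uniformly from every reachable $s^0_r$.

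I would then build $\Cc'$ by keeping the components $\Cc_p$ for $p\ne r$ and all $r$-$q$ communication transitions of $\Cc$ intact, while replacing the intra-round part of $\Cc_r$: states are pairs $(s^0_r,s_r)$, the projection is $\pi'(s^0_r,s_r)=s_r$, local transitions are given by $\sigma$, and each $r$-$q$ communication refreshes the first coordinate to the new post-communication $r$-state. Positionality of $\sigma$ prevents any $\Aa_r$-state from being revisited on a local $r$-path inside one round, and the first coordinate $s^0_r$ separates states belonging to different rounds, so no local $r$-path crosses between two $\Cc'_r$-states with equal $\pi'$-image. Hence $\Cc'$ is $r$-memoryless and still covering.

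The main obstacle is preserving \emph{global} correctness rather than just the parity condition for $r$. The runs of $\Cc'$ and $\Cc$ may diverge on local $r$-actions, which could in principle change the $\Aa_r$-state at which a $q$-$r$ communication fires and cascade into differences for the other processes. I would eliminate this risk by constraining $\sigma$ to use only local $r$-transitions that already appear in some $\Cc_r$-run, so that every maximal run of $\Cc'$ can be mirrored step-by-step on all non-local-$r$ actions by a maximal run of $\Cc$. This amounts to a parity game with an additional safety component, which remains positionally determined by a standard product construction. The mirroring then transfers the correctness conditions $\Cor_p$ for every $p\ne r$ from $\Cc$ to $\Cc'$, while $\Cor_r$ follows directly from $\sigma$'s parity winning, completing the argument.
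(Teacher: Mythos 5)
Your overall strategy---exploit $r$-awareness and the positionality of parity conditions to collapse the controller's memory between consecutive $q$--$r$ communications---is the same idea that drives the paper's proof, which uses signatures (progress measures) for exactly this purpose. But the way you set it up has a genuine gap. You build the game $G_{s^0_r}$ on the \emph{plant} states $S_r$ and then replace $\Cc_r$ by a controller whose local states are pairs $(s^0_r,s_r)$. This severs the connection to $\Cc$'s global behavior in two places that your safety patch does not repair. First, the winning condition of $G_{s^0_r}$ for finite plays is not well defined: several controller states $c_r\neq c'_r$ with $\pi(c_r)=\pi(c'_r)=s^0_r$ may start rounds with genuinely different behaviors and different acceptable exit states, so there is no single ``round of $\Cc$ starting at $s^0_r$'' to refer to; moreover, whether a communication ``would fire'' depends on $q$'s state, and the terminal condition $T_r$ for rounds that simply stop is not accounted for. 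Second, you cannot keep ``all $r$--$q$ communication transitions of $\Cc$ intact'': those transitions are defined on pairs $(c_q,c_r)$ of \emph{controller} states, and the information that $\Cc_r$ passes to $\Cc_q$ at a communication depends on $c_r$, not merely on $\pi(c_r)$. After collapsing to $(s^0_r,s_r)$ there is no canonical $c_r$ to plug in, and different choices change $\Cc_q$'s subsequent behavior.

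Your proposed fix---restrict $\sigma$ to local transitions ``that already appear in some $\Cc_r$-run''---is too weak: transitions occurring in different runs of $\Cc_r$ need not compose into a single run of $\Cc_r$, so a $\sigma$-play from a round start need not be mirrored by any path of $\Cc_r$ from the corresponding controller state; and even if the plant states matched at the exit point, the controller states (hence the enabled communications and everything downstream) need not. The paper sidesteps all of this by never leaving $C_r$: it assigns signatures to the local transition graph $\Ccloc_r$ and redirects each local transition to a representative $\rep(c_r)$ that has the \emph{same} $\pi$-image as $c_r$ and is \emph{reachable from} $c_r$ by local actions. Every path of the modified controller then unfolds to a genuine path of $\Cc_r$ ending in the same controller state (Remark~\ref{rem:path}), which is what makes both the covering property and the transfer of correctness immediate. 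A minor additional point: positionality of $\sigma$ does not by itself prevent revisiting an $\Aa_r$-state inside a round (positional strategies happily generate even cycles); in your construction $r$-memorylessness would instead follow from the fact that the states within a round are in bijection with $S_r$.
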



The proof of Lemma~\ref{lemma-r-memoryless} uses the notion of
  signatures, that is classical in 2-player parity games, for defining
  a $r$-memoryless controller $\Cc^m$ from $\Cc$. The idea is to use
  representative states of $C_r$, defined in each strongly connected
  component according to a given signature and covering function
  $\pi$.

By Lemma~\ref{lemma:covering} we can assume that we have a covering
controller for $\Aa$. 
Let us fix an arbitrary
linear order on the set $C_r$ of states of the automaton $\Cc_r$. Let
$\Ccloc_r$ denote the graph obtained from $\Cc_r$ by taking $C_r$ as
set of vertices and the transitions on local $r$-actions as
edges. Since $\Cc$ is a covering controller, every sequence of actions
in $\Cc$ can be performed in the controlled plant. Since $\Cc$ is
correct for $\Aa$, every
infinite sequence of local $r$-actions in the controlled plant
satisfies the parity condition. We can lift this parity condition
directly to $\Cc$ thanks to the fact that $\Cc$ is covering. We obtain
that every infinite path in $\Ccloc_r$ satisfies the parity condition.

  Before proceeding it will be convenient to recall some facts about
  parity games, in particular the notion of
  signature (or progress measure)~\cite{wal01ic}. We consider $\Ccloc_r$ as a
  parity game. Suppose that it uses priorities from
  $\set{1,\dots,d}$. A signature is a $d$-tuple of natural numbers,
  that is, an element of $\Nat^d$. We will be interested in
  assignments of signatures to states of $\Ccloc_r$, that is in
  functions $\sig:C_r\to\Nat^d$. Signatures are ordered
  lexicographically. We write $\sig(c)\geq \sig(c')$ if the signature
  assigned to $c$ is lexicographically bigger or equal to that of
  $c'$. For $i\in\set{1,\dots,d}$ we write $\sig(c)\geq_i \sig(c')$ if
  the signature of $c$ truncated to the first $i$ positions is
  lexicographically bigger or equal to the signature of $c'$ truncated
  to the first $i$ positions.  For a fixed assignment of signatures
  $\sig$ and two states $c$, $c'$ of $\Cc_r$ we write $c\sgt c'$ if
  \begin{equation*}
    \sig(c)\geq_{\W(c)} \sig(c')\quad\text{and the inequality is strict
      if $\W(c)$ is odd.}
  \end{equation*}
  We say that an assignment of signatures $\sig:C_r\to\Nat^d$ is
  \emph{consistent} if for every edge $(c,c')$ of $\Ccloc_r$ we have
  $c\sgt c'$. We now recall a fact that holds for every finite
  parity game, but we specialize them to $\Ccloc_r$.

 \noindent
\textbf{Fact.} 
 Every path of $\Ccloc_r$ satisfies the parity
  condition iff there is a consistent assignment of signatures to
  states of $\Ccloc_r$.

  After these preparations we can define for every state $c_r$ of
  $C_r$ its
  representative state in $C_r$, denoted $\rep(c_r)$, as the unique
  state $c'_r$ satisfying the following conditions:
  \begin{enumerate}
\item $\pi(c_r) = \pi(c'_r)$ and $c'_r$ is reachable from $c_r$;
\item for every $c''_r$ with $\pi(c_r)=\pi(c''_r)$: if $c''_r$ is
  reachable from $c'_r$ then it belongs to the same SCC as
  $c'_r$;
\item among all states satisfying points (1) and (2) consider those
  with the smallest signature; if there is more than one such state
  then pick the state that is the smallest in our fixed arbitrary ordering.
  \end{enumerate}

  \begin{remark}\label{rem:rep} 
    For every $c'_r$ reachable in $\Ccloc_r$ from $\rep(c_r)$: if
    $\pi(c'_r)=\pi(c_r)$ then $\rep(c'_r)=\rep(c_r)$. Indeed, by
    conditions (1) and (2) above $\rep(c'_r)$ and $\rep(c_r)$ must be
    in the same SCC. But then, the representative is uniquely
    determined by signature and ordering. 
  \end{remark}

  We define now $\mem\Cc_r$ from $\Cc_r$ by redirecting every
  transition on a local $r$-action to representatives: if the transition
  goes to a state $c_r$ we make it go to $\rep(c_r)$. Of course,
  $\mem\Cc$ is still covering and
  the above remark 
implies that it is $r$-memoryless.

  \begin{remark}\label{rem:path}
    If we have a transition $c_r\act{b} c'_r$ in $\mem\Cc_r$ then
    there is a sequence $z \in \S^{loc}_r$ of local $r$-actions and
    some state $c''_r$ such that
    $c_r\act{b} c''_r\act{z} c'_r$ in $\Cc_r$.
  \end{remark}

  Remark~\ref{rem:path} allows to map paths in $\mem\Cc_r$ into paths
  in $\Cc_r$. Consider a state $c_1$ of $\mem\Cc_r$ and a finite
  sequence $x\in(\Sloc_r)^*$ such that $x=b_1\cdots b_k$ labels some path
  from $c_1$ in $\mem\Cc_r$, say $u=c_1\act{b_1} c_2\act{b_2}
  c_3\cdots \act{b_k}c_{k+1}$. Remark~\ref{rem:path} gives us a
  sequence $\rep^{-1}(c_1,x)=b_1z_1b_2z_2\dots b_kz_k$, and a
  corresponding path in $\Cc_r$: $c_1\act{b_1z_1}
  c_2\act{b_2z_2}c_3\dots \act{b_kz_k} c_{k+1}$, for some $z_i\in
  (\Sloc_r)^*$. In particular the two paths end in the same state. Of
  course $\rep^{-1}(c_1,x)$ is defined similarly for infinite
  sequences $x$.

\medskip

\textbf{Proof of Lemma~\ref{lemma-r-memoryless}.}
  We are ready to show that $\mem\Cc$ obtained from $\Cc$ by replacing $\Cc_r$
  with $\mem\Cc_r$ satisfies the parity condition. For this take a
  maximal run and suppose towards a contradiction
  that it does not satisfy the parity condition.

  If on this run there are infinitely many communications between $q$ and $r$ 
  then there is an equivalent run whose labeling has the form:
  \begin{equation}
    u=y_0x_0a_1y_1x_1a_2\dots
  \end{equation}
  where $a_i\in \S_q \cap \S_r$, $x_i\in (\S^{loc}_r)^*$, and $y_i\in
  (\S\setminus\S_r)^*$. Here two runs are equivalent means that the
  projections of the two runs on every process are identical. In
  particular, if two runs are equivalent and one of them satisfies the
  correctness condition then so does the other. 

  Let $c^i_r=\state_r^{\mem\Cc}(y_0x_0a_1\cdots a_i)$ be the state of
  $\mem\Cc_r$ reached on the prefix of $u$ up to $a_i$. Let
  $x'_i=\rep^{-1}(c^i_r,x_i)$. We get that the sequence
  \begin{equation}
    u'=y_0x'_0a_1y_1x'_1a_2\dots
  \end{equation}
  is a labeling of a maximal run in $\Cc$. The projections on processes other than $r$ are
  the same for $u$ and $u'$. It remains to see if the parity condition
  on $r$ is satisfied. We have $c_r^i \act{x_i} c_r^{i+1}$ in
  $\mem\Cc_r$ and $c_r^i \act{x'_i} c_r^{i+1}$ in $\Cc_r$. Since we
  lifted priorities to $\Cc$ and $\mem\Cc$ (being both covering),  the
  $r$-awareness of $\Aa$  lifts to $\Cc$ and $\mem\Cc$, so the
  same maximal rank is seen when reading $x_i$ and $x'_i$. This shows
  that the parity condition on $r$ is satisfied on the run of
  $\mem\Cc$ on $u$, since it is satisfied by the run of $\Cc$ on $u'$.

  Consider now a maximal run with finitely many communications between
  $q$ and $r$. There is an equivalent one labeled by a sequence of the
  form: 
  \begin{equation}
    u = y_0x_0a_1y_1x_1a_2\cdots a_ky_kx_k
  \end{equation}
  where $y_k$ and $x_k$ are potentially infinite. Since we have only
  modified the $r$-component of the controller, it must be $x_k$ that
  does not satisfy the parity condition on $r$. 

  Suppose first that $x_k=b_1b_2\cdots$ is infinite. Take the run
  $c_1\act{b_1} c_2\act{b_2} c_3\dots$ in $\mem\Cc_r$, where $c_1=\state_r^{\mem\Cc}(y_0x_0a_1\cdots a_k)$. We have a run
  $c_1\act{b_1} c'_2\act{x_2} c_2\act{b_2} c'_3\act{x_3} c_3 \cdots$ in
  $\Cc_r$, where $\rep(c'_i)=c_i$ and $x_i \in (\Sloc_r)^*$ is the accessibility path,
  as given by Remark~\ref{rem:path}. We have $c_i\sgt c'_{i+1}$ for all
  $i=1,2,\dots$ because there is an edge from $c_i$ to $c'_{i+1}$ in
  $\Ccloc_r$. Recall that $c_i=\rep(c'_i)$. The definition of
  representatives implies that either $\sig(c'_i)\geq \sig(c_i)$ or
  $c_i$ is in a strictly lower SCC than $c'_i$. Since lowering a
  component can happen only finitely many times we have
  $\sig(c'_i)\geq \sig(c_i)$ for all $i$ bigger than some $n$. We get
  $c_i\sgt c_{i+1}$ for $i>n$ which implies that $x_k$ satisfies the
  parity condition. A contradiction. 

If $x_k$ is finite then we define the sequence $u'=y_0x'_0a_1y_1x'_1a_2\cdots
a_ky_kx'_k$ in $L(\Cc)$, as in the first case. Since $u$ was maximal in $\mem\Cc$,
we have that $u'$ is maximal in $\Cc$ (if $r$ can do an action in
$u'$, the same can be done in $u$, since $u$ and $u'$ end in the same
state). Thus the $r$-state reached in $\Aa$ by $u$  belongs to $T_r$,
since this holds already for $u'$. We get again a contradiction.
\medskip

We will use Lemma~\ref{lemma-r-memoryless} to reduce the control
problem to that for $r$-short automata.

Given $\Aa$ we define a $r$-short automaton $\short\Aa$. All its
components will be the same but for the component $r$. The states $\short S_r$ of
$r$ will be sequences $w\in S^+_r$ of states of $\Aa_r$ without
repetitions, plus two new states $\top,\bot$. For a local transition
$s'_r\act{b} s''_r$ in $\Aa_r$ we have in $\short\Aa_r$ 
transitions:
\begin{align*}
  ws'_r\act{b}& ws'_rs''_r &\text{if $ws'_rs''_r$ a sequence without
    repetitions}\\
    ws'_r\act{b}& \top &\text{if $s''_r$ appears in $w$ and the
      resulting loop is even}\\
    ws'_r\act{b}& \bot &\text{if $s''_r$ appears in $w$ and the
      resulting loop is odd}
\end{align*}
There are also
communication transitions between $q$ and $r$:
\begin{equation*}
  (s_q,ws'_r)\act{b}(s'_q,s''_r)\qquad\text{if
    $(s_q,s'_r)\act{b}(s'_q,s''_r)$ in $\Aa$}
\end{equation*}
Notice that $w$ disappears in communication transitions. 
The  parity condition for $\short\Aa$ is also rather straightforward: it
is the same for the components other than $r$, and for $\Aa_r$ it is
\begin{itemize}
\item $\short\W(ws_r)=\W(s_r)$,
\item  $\short T_r=\set{\top}\cup\set{ws_r : s_r\in T_r}$.
\end{itemize}


\medskip

\textbf{Proof of Theorem~\ref{thm:short}:}
  Consider the implication from left to right.
  Let $\Cc$ be a correct covering controller for $\Aa$. 
  By Lemma~\ref{lemma-r-memoryless} we can assume that it is $r$-memoryless. 
  We show that $\Cc$ is also a covering correct controller for
  $\short\Aa$. We will concentrate on correctness, since the covering part
  follows by examination of the definitions. 

  Let us take some maximal run $\short\run(u)$ of $\short\Aa\times
  \Cc$, and suppose by contradiction that it does not satisfy the
  parity condition of $\short\Aa$.  By definition $\run(u)$ is a run
  of $\Aa\times\Cc$, but it may not be maximal. We have by
  construction of $\short\Aa$ that $\short\state_p(u)=\state_p(u)$ for
  $p\not=r$ and that $\state_r(u)$ is the last element of
  $\short\state_r(u)$. (Recall that $\short\state_p(u),\state_p(u)$
  denote the state reached on $u$ by 
  $\short\Aa\times \Cc$ and $\Aa\times \Cc$, resp.)

  Suppose that $\run(u)$ is not a maximal run of $\Aa\times \Cc$. We will
  extend it to a maximal run $\run(\bar u$). If $\run(u)$ ended in $\bot$ in the
  $r$-component of $\short \Aa$ then we could extend $\run(u)$ to a run of
  $\Aa\times\Cc$ not satisfying the parity condition (here we use that
  $\Cc$ is memoryless, so the odd loop in $\Aa$ exists also into one in $\Aa
  \times \Cc$). So the only
  other possibility is that $\run(u)$ ends in $\top$.  In this case it is
  possible to extend $\run(u)$ to a complete run of $\Aa\times\Cc$ by adding
  the even loop in the $r$-component. This makes $r$
  satisfy the parity condition. Let $\run(\bar u)$ be the resulting run.

  Now observe that if a parity condition for some process $p\not=r$ is
  violated on $u$ then on $\bar u$ the same condition is violated. If
  it is violated on $r$ then the only remaining possibility is that
  there are finitely many $r$-actions in $u$, and the state reached on
  $u$ is $ws_r$ with $s_r\not\in T_r$. But then $u$ is a maximal run
  of $\Aa \times \Cc$ and is not well terminated on $r$ either, a
  contradiction.

  For implication from right to left we take a covering controller $\short \Cc$
  for $\short\Aa$ and construct a controller $\Cc$ for $\Aa$. The
  controller $\Cc$ will be obtained by modifying the $r$-component of
  $\short \Cc$.  The states of $\Cc_r$ will be sequences of states
  of $\short\Cc_r$. They will be of bounded length. We will have that
  if $\short c_1\cdots \short c_k$ is a state of $\Cc_r$ then
  $\short\pi(\short c_k)$ is the state $s_1\cdots s_k$ of $\short\Aa_r$,
  where  $\short\pi(\short c_j)=s_1\cdots
  s_j$, for $j=1,\dots,k$. Moreover, we define $\pi(\short c_1 \cdots
  \short c_k)=s_k$. The transitions of $\Cc_r$ are
  \begin{itemize}
  \item $w\short c \act{b} w\short c \short d $\quad if $\short
    c \act{b} \short d$ in $\short\Cc_r$ and $\short \pi(\short d)\not=\top$. 
  \item $\short c_1\cdots \short c_k \act{b} \short c_1\cdots \short
    c_j$\quad if $\short c_k\act{b}\short c$ in $\short\Cc_r$, $\short
    \pi(\short c)=\top$ and $j$ is such that $\short\pi(\short c_k)$ is
    $s_1\cdots s_k$ with  $s_k\act{b} s_j$ in $\Aa$. 
  \end{itemize}
  Notice that since $\short\Cc$ satisfies the parity condition $\bot$
  cannot be reached.

\medskip

\begin{remark}
The construction of $\Cc$ guarantees that
  every sequence of local $r$-actions $x$ of $\Cc$ has a corresponding
  (possibly shorter) sequence $x'$ of $\short \Cc$. If the sequence in
  $\short\Cc$ starts in $s_1$ and finishes in $s_2$ then the sequence
  in $\Cc$ starts also in $s_1$, but now considered as a sequence of
  length $1$, and finishes in a sequence ending in $s_2$. Since $\Aa$
  is $r$-aware and $\Cc$ and $\short \Cc$ are both covering, this
  means that the maximal rank seen on both sequences is the same.
  
\end{remark}

  We need to show that all maximal runs of $\Aa\times\Cc$ satisfy the
  parity condition. For contradiction suppose that $\run(u)$ does not.

  If there are infinitely many communications between $q$ and $r$ on
  $u$ then we write it as
 \begin{equation}
    u=y_0x_0a_1y_1x_1a_2\dots
  \end{equation}
  where $a_i\in \S_q \cap \S_r$, $x_i\in (\Sloc_r)^*$, and $y_i\in (\S
  \setminus \S_r)^*$. Now for every $x_i$, Observation 1 gives $x'_i$
  so that the maximal ranks on $x_i$ and $x'_i$ are the same, so for
  \begin{equation}
    u'=y_0x'_0a_1y_1x'_1a_2\dots
  \end{equation} 
  $\run(u')$ is a maximal run of $\short\Cc$. This gives a run
  violating the parity condition of $\short \Cc$.

  If there are finitely many communications between $q$ and $r$ on
  $u$ then we write it as
  \begin{equation}
    u=y_0x_0a_1y_1x_1a_2\dots a_ky_kx_k
  \end{equation} where $y_k$ and $x_k$ are potentially infinite. The
  only complicated case is when $x_k$ is infinite. We need to show that
  the run of $\Cc_r$ on $x_k$ satisfies the parity condition. Recall
  that the states of $\Cc_r$ are  sequences of states of
  $\short\Cc_r$. Moreover the length of this 
  sequences is bounded. Take the shortest sequence appearing infinitely
  often in $x_k$. The biggest rank seen between consecutive appearances of this
  sequence is even, since the path can be decomposed into
  (several) even loops of $\Aa$.

\medskip

\begin{corollary}
  In the $r$-short plant $\short\Aa$ the $r$-controller may be chosen
  memoryless since there are no infinite local $r$-plays.\igw{remove corollary?}
\end{corollary}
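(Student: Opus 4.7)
The plan is to assemble pieces already in place. By Theorem~\ref{thm:short}, a correct controller for $\Aa$ yields a correct controller for $\short\Aa$, and by Lemma~\ref{lemma-r-memoryless} the latter may be chosen covering and $r$-memoryless. The goal is to show that in $\short\Aa$ the $r$-memoryless condition already forces $\Cc_r$ to be memoryless in the standard sense, i.e.~a controller whose $r$-component is indexed by $\short S_r$.

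First, I would verify directly from the construction in Section~\ref{sec:short} that local $r$-transitions of $\short\Aa_r$ either extend a repetition-free word $w s_r$ (which can happen at most $|S_r|$ times before a communication with $q$), or jump to $\top$ or $\bot$, both of which are sinks for local $r$-actions. Hence every maximal local $r$-play in $\short\Aa$ is finite of length at most $|S_r|+1$; in particular, no infinite local $r$-play exists.

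Second, I would use this boundedness to collapse $\Cc_r$. Because $\Cc$ is $r$-memoryless, the projection $\pi$ is injective along every local $r$-segment of $\Cc_r$. Identifying controller-states that share the same image under $\pi$ therefore yields a well-defined $r$-component with state set $\short S_r$, together with local $r$-transitions uniquely determined by $\pi$; communication transitions with $q$ are inherited unchanged. Correctness is preserved because every local $r$-segment terminates, after a bounded number of steps, either in a communication with $q$, which re-synchronizes the controller with $\short\Aa$, in $\top$, where the parity condition is already satisfied by construction of $\short\Aa$, or in a terminal $r$-state from $\short T_r$.

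The main delicacy lies in this last merging step, where one must check that controller-states reached through different communication histories can be identified without introducing spurious local $r$-plays. This is exactly the content of the $r$-memoryless property combined with the bounded depth of local segments: after each communication the $r$-component of $\Cc$ restarts in a finite sub-automaton in which $\pi$ is an isomorphism onto its image, so no two merged states can diverge on a later local $r$-action.
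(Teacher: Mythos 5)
The paper offers no proof of this corollary at all---it is stated as an immediate remark (with a marginal note even questioning whether to keep it)---so the comparison here is between your argument and what the corollary can legitimately claim. Your first two paragraphs are fine and capture the intended content: local $r$-plays in $\short\Aa$ have length at most $|S_r|+1$ because the $r$-states of $\short\Aa$ are repetition-free sequences with $\top,\bot$ as sinks for local actions, and Lemma~\ref{lemma-r-memoryless} (or, more elementarily, the fact that the local game between two communications is played on a finite DAG, which is what the ``since there are no infinite local $r$-plays'' clause alludes to) yields an $r$-memoryless covering controller.

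The gap is in your third paragraph. The $r$-memoryless property makes $\pi$ injective only along a single local $r$-segment; it says nothing about two controller states $c_r\neq c'_r$ with $\pi(c_r)=\pi(c'_r)$ that are reached after \emph{different} communications with $q$. Such states can, and in general must, enable different controllable actions, because the whole point of causal memory is that $r$'s controller bases its decisions on information received from $q$ during synchronization---see Example~\ref{ex:control}, where the leaf controller has states $\bar p_1$ and $\underline{p_1}$ with the same projection $p_1$ but different enabled actions. Hence identifying all controller states with the same $\pi$-image does not yield a well-defined controller (a merged state would inherit conflicting transition choices, and taking the union of behaviors may destroy correctness), and the statement your construction would establish---that the $r$-component can be taken with state set exactly $\short S_r$---is false in general. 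The corollary should be read as asserting memorylessness only \emph{within} a local segment between consecutive communications of $q$ and $r$ (equivalently, that the $r$-local strategies $f$ used in the reduction have finite domain, indexed by the states of $\short\Aa_r$), which is exactly what your first two paragraphs already give; the final collapse step should be dropped.
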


\begin{remark}
  We claim that the complexity of the reduction from $\Aa$ to $\red\Aa$ is
  polynomial in the size of $\Aa_q$ and simply exponential in the size
  of $\Aa_r$. The reason is as follows. States of $\short\Aa_r$ are
  simple paths  (i.e., without repetition of states) of $\Aa_r$. When going from
  $\Aa$ to $\red\Aa$, the states of $\red\Aa_q$ contain $r$-local
  strategies $f :(\Sloc_r)^* \to \Ssys_r$. Putting things together, in $f$
  we deal with paths of $\short\Aa_r$ (mapping them to $\Ssys_r$). But the latter
  can be written more succinctly as paths of $\Aa_r$ without repetitions.
\end{remark}


\subsection{The reduced automaton $\red\Aa$}\label{sec:new} 
Equipped with the notions of covering controller and $r$-short strategy
we can now present the construction of the reduced automaton $\red
\Aa$. We suppose that $\Aa=\struct{\set{S_p}_{p\in
    \red\PP},s_{in},\set{\d_a}_{a\in\S}}$ is
$r$-short and we define now the reduced automaton $\red\Aa$ that results
by eliminating process $r$ (cf.\ Figure~\ref{fig:schema}).  Let
$\red\PP=\PP\setminus \set{r}$.  We construct
$\red\Aa=\struct{\set{\red{S_p}}_{p\in
    \red\PP},\red{s_{in}},\set{\red{\d_a}}_{a\in\red\S}}$ where the
components are defined below.

All the processes $p\not=q$ of $\red\Aa$ will be the same as in
$\Aa$. This means: $\red{S_p}=S_p$, and $\red\S_p=\S_p$. Moreover, all
transitions $\d_a$ with $\loc(a) \cap \{q,r\}=\es$ are as in
$\Aa$. Finally, in $\red\Aa$ the  correctness condition of $p\not=q$
is the same as in $\Aa$.

Before defining process $q$ in $\red\Aa$ let us introduce the notion
of $r$-local strategy. An \emph{$r$-local strategy from a state
  $s_r\in S_r$} is a partial function $f:(\Sloc_r)^*\to\Ssys_r$
mapping sequences from $\Sloc_r$ to actions from $\Ssys_r$, such that
if $f(v)=a$ then $s_r \act{va}$ in $\Aa_r$.  Observe that since the
automaton $\Aa$ is $r$-short, the domain of $f$ is finite.

Given an $r$-local strategy $f$ from $s_r$, a local action $a \in
\Sloc_r$ is \emph{allowed by $f$}
if $f(\epsilon)=a$, or $a$ is uncontrollable. For $a$ allowed by $f$
 we
denote by $f_{|a}$ the $r$-local strategy defined by
$f_{|a}(v)=f(av)$; this is a strategy from $s'_r$, where $s_r \act{a}
s'_r$.


The states of process $q$ in $\red\Aa$ are of one of the following types:
\begin{equation*}
\struct{s_q,s_r}\,,\quad \struct{s_q,s_r,f}\,,\quad  \struct{s_q,a,s_r,f}  \,,
\end{equation*}
where $s_q \in S_q, s_r\in S_r$, $f$ is a $r$-local strategy from $s_r$,
and $a\in\Sloc_q$. The new initial state for $q$ is 
$\struct{(s_{in})_q,(s_{in})_r}$. 
Recall that that since $\Aa$ is $r$-short, any $r$-local strategy in $\Aa_r$ is
necessarily finite, so $\red S_q$ is a finite set. Recall also
  that controllable actions are local.

The transitions of $\red\Aa_q$ are presented in
Figure~\ref{fig:transitions}. Transition \circled{$1$} chooses an
$r$-local strategy $f$. It is followed by transition \circled{$2$}
that declares a controllable action $a\in\Ssys_q$ that is enabled from
$s_q$. Transition \circled{$3$} executes the chosen action $a$; we
require $s_q\act{a} s'_q$ in $\Aa_q$. Transition \circled{$4$}
executes an uncontrollable local action $b_4\in\Senv_q$; provided
$s_q\act{b_4} s''_q$ in $\Aa_q$. Transition \circled{$5$} executes a local action
$b_5\in\Sloc_r$, provided that $b_5$ is allowed by $f$ and $s_r\act{b_5} s'_r$.
Transition \circled{$6$} simulates a synchronization $b_6$ between $q$ and
$r$; provided $(s_q,s_r)\act{b_6}(s'_q,s'_r)$ in $\Aa$. Finally, transition \circled{$7$} simulates a synchronization
between $q$ and $p\not=r$. An example of a simulation of $\Aa_q$ and
$\Aa_r$ by $\red\Aa_q$ is presented in
Figure~\ref{fig:simulation}. The numbers below transitions refer to
the corresponding cases from the definition. 








To summarize, in $\red\S_q$ we have all actions of $\S_r$ and $\S_q$,
but they become uncontrollable. All the new actions of
process $q$ in plant $\red\Aa$ are \emph{controllable}:
\begin{itemize}
\item action $\ch(f) \in \Ssys$, for every local $r$-strategy $f$,
\item action $\ch(a)$, for every $a\in\Ssys_q$.
\end{itemize}
\begin{figure}[tbp]
  \centering
\includegraphics[scale=1]{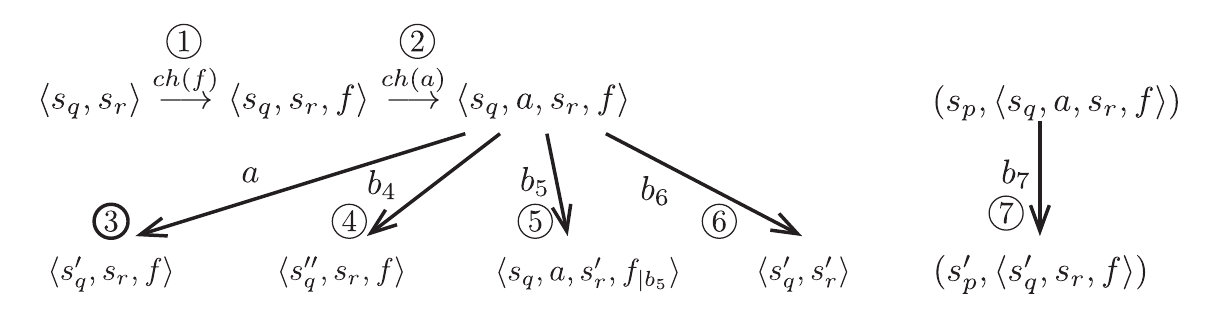}
  \caption{Transitions of $\red\Aa$}\label{fig:transitions}
\end{figure}

\begin{figure}[tbp]
  \centering
\includegraphics[scale=.6]{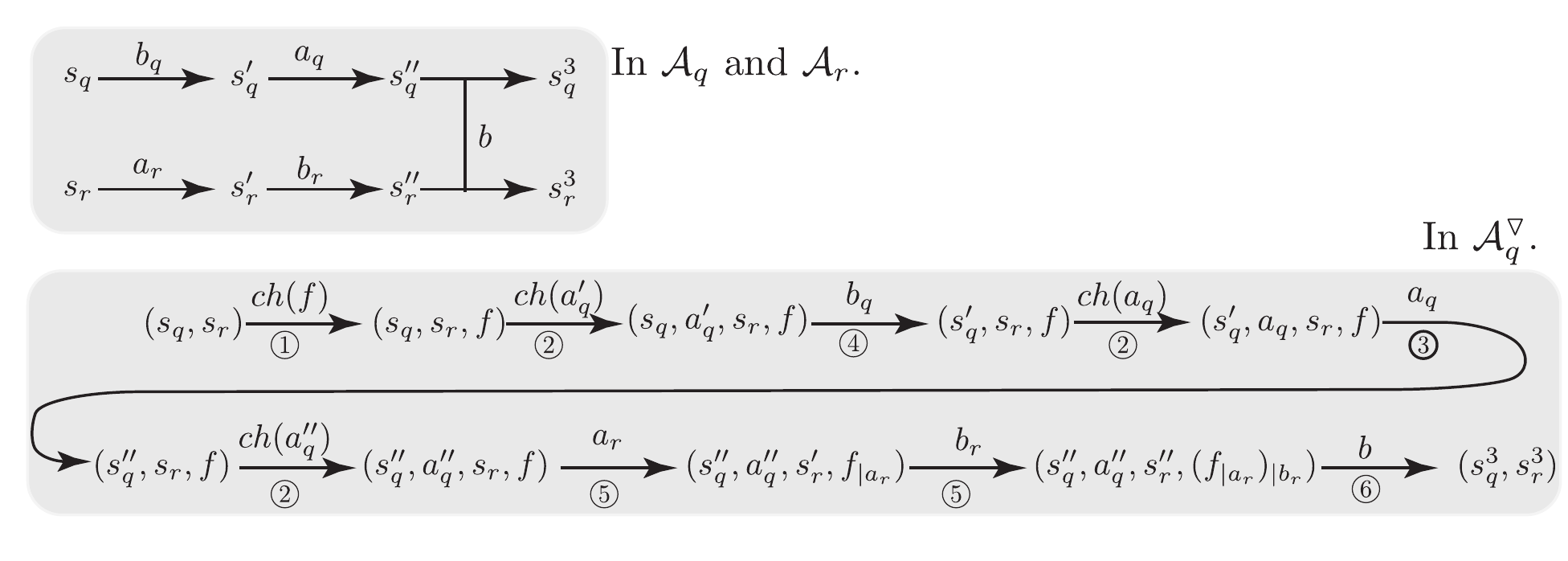}
  \caption{Simulation of $\Aa_q$ and $\Aa_r$ by $\red\Aa_q$.}\label{fig:simulation}
\end{figure}
The correctness condition for process $q$ in $\red\Aa$ is:
\begin{enumerate}
\item The correct infinite runs of $q$ in $\red\Aa$ are those that have
  the projection on transitions of $\Aa_q$ correct with respect to $\Cor_q$, and
  either: \emph{(i)} the projection on transitions of $\Aa_r$ is infinite and
  correct with respect to $\Cor_r$; or \emph{(ii)} the projection on
  transitions of $\Aa_r$ is finite and for $f,s_r$ appearing in almost
  all states of $q$ of the run we have that from $s_r$ all sequences
  respecting strategy $f$ end in a state from $T_r$.
\item $\red T_q$ contains states $\struct{s_q,s_r,f}$ such that
  $s_q\in T_q$, and $s_r\in T_r$.
\end{enumerate}
Item $1 (ii)$ in the  definition above captures 
the case where $q$ progresses alone till infinity and blocks
$r$, even though $r$ could reach a terminal state in a couple of
moves. Clearly, item $1$ can be expressed as an $\omega$-regular
condition. \igw{added some explanations} The definition of correctness condition is one of the
principal places where the $r$-short assumption is used. Without this
assumption we would need to cope with the situation where we have an
infinite execution of $\Aa_q$, and at the same time an infinite
execution of $\Aa_r$ that do not communicate with each other. In this
case $\red\Aa_q$ would need to fairly simulate both executions in some way.


The reduction is rather delicate since in concurrent systems there are
many different interactions that can happen.  For example, we need to
schedule actions of process $q$, using $\ch(a)$ actions, before the
actions of process $r$. The reason is the following. First, we need to
make all $r$-actions uncontrollable, so that the environment could
choose any play respecting the chosen $r$-local strategy. Now, if we
allowed controllable $q$-actions to be eligible at the same time as
$r$-actions, then the control strategy for automaton $\red\Aa$ would
be to propose nothing and force the environment to play the
$r$-actions. \igw{added some explanations} This would allow the
controller of $\red\Aa$ to force the advancement of the simulation of
$r$ and get information that is impossible to obtain by the controller
of $\Aa$.

Together with Theorem~\ref{thm:short}, the theorem below implies 
our main
Theorem~\ref{thm:main}. 

\begin{theorem}\label{thm:correctness}
  For every $r$-short Zielonka automaton $\Aa$ and every local,
  $\omega$-regular
  correctness conditions: there is a correct covering controller for
  $\Aa$ iff there is a correct covering controller for $\red\Aa$. 
  The size of $\red\Aa_q$ is polynomial in the size of $\Aa_q$ and
  exponential in the size of $\Aa_r$.
\end{theorem}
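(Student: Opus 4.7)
The plan is to prove the two directions by explicit simulation constructions, then verify that the resulting controllers are covering, that the simulations induce a bijection on maximal runs, and that the correctness conditions match up. For the size bound, I will count states at the end.

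For the direction $(\Rightarrow)$, starting from a correct covering controller $\Cc$ for $\Aa$ with covering map $\pi$, I would keep $\red\Cc_p=\Cc_p$ for every $p\neq q$ and build $\red\Cc_q$ whose states mirror those of $\red\Aa_q$, pairing a state of $\Cc_q$ with a state of $\Cc_r$ plus the current phase (no strategy / strategy chosen / action chosen). The key observation is that, since $\Aa$ is $r$-short and $\Cc$ covers $\Aa$, the sub-automaton of $\Cc_r$ reachable from any state $c_r$ by local $r$-actions before the next synchronisation with $q$ has bounded depth, and the unique controllable choice made by $\Cc_r$ along each such branch defines a canonical finite $r$-local strategy $f_{c_r}$ from $\pi(c_r)$. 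The controller $\red\Cc_q$ then fires $\ch(f_{c_r})$ at its \circled{1}-step, $\ch(a)$ at its \circled{2}-step (with $a$ the action $\Cc_q$ chooses next), and simply mirrors $\Cc_q$ and $\Cc_r$ on the remaining transitions. The covering map $\red\pi$ is the obvious lift of $\pi$.

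For the direction $(\Leftarrow)$, given a correct covering controller $\red\Cc$ for $\red\Aa$, I recover $\Cc$ by keeping $\Cc_p=\red\Cc_p$ for $p\notin\{q,r\}$ and collapsing the $\ch$-prefixes of $\red\Cc_q$. Concretely, $\Cc_q$ fires an action $a\in\Ssys_q\cup\Senv_q$, or a communication with a non-$r$ neighbour, exactly when $\red\Cc_q$ does so after passing through its $\ch(f)$ and $\ch(a)$ states. Component $\Cc_r$ has states of the form $(c_r,f)$, where $f$ is the strategy currently in force: on a controllable $r$-local action it fires $f(\varepsilon)$ and moves to $(c_r',f_{|f(\varepsilon)})$, and on an uncontrollable $r$-local action $b$ it fires $b$ and moves to $(c_r',f_{|b})$. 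At a synchronisation $b_6$ between $q$ and $r$, both components agree on the new strategy $f'$ that $\red\Cc_q$ picks immediately after $b_6$; this is well defined because $\red\Cc$ is deterministic, and the initial value of $f$ in $\Cc_r$ is the strategy chosen by $\red\Cc_q$ at its very first \circled{1}-step.

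The last step is to verify the simulation. Every maximal run of $\Cc\times\Aa$ extends uniquely to a maximal run of $\red\Cc\times\red\Aa$ by inserting the $\ch$-steps at the right positions, and conversely every maximal run of $\red\Cc\times\red\Aa$ projects to one of $\Cc\times\Aa$ by erasing the $\ch$-labels and by projecting $\red\Aa_q$ onto its $\Aa_q$ and $\Aa_r$ coordinates. Projections on every $p\neq r$ are identical, so $\Cor_p$ transfers without change. For $r$ and $q$, I analyse three cases: an infinite run with infinitely many $r$-actions satisfies $\Cor_r$ via case $1(i)$ of the $\red\Aa_q$ correctness condition; an infinite run whose $r$-projection is finite but whose $q$-projection is infinite is captured exactly by case $1(ii)$, since the freezing of $(s_r,f)$ in $\red\Aa_q$ reflects that $\Cc_r$ is stuck on pending strategy $f$ at state $s_r$; and a finite run terminates in $\red T_q$ iff it terminates with $s_q\in T_q$ and $s_r\in T_r$.

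The main obstacle is this last case analysis, specifically showing that the enforced ordering $\ch(f),\ch(a),a,\ldots$ in $\red\Aa_q$ does not create spurious runs without a parallel counterpart in $\Cc\times\Aa$. The point made after the definition of $\red\Aa$ is exactly what is needed here: because every simulated $r$-action in $\red\Aa_q$ is uncontrollable, the environment can force the controller to commit to $\ch(f)$ and $\ch(a)$ \emph{before} observing any $r$-move, so the controller of $\red\Aa$ cannot gain information that would be unavailable to the controller of $\Aa$. The size bound then follows by counting: an $r$-local strategy is a finite tree whose depth is bounded by the $r$-shortness constant and whose branching is at most $|\S_r|$, yielding exponentially many strategies in $|\Aa_r|$, while each state of $\red\Aa_q$ only pairs such a strategy with one state of $\Aa_q$ (and one of $\Aa_r$), giving the claimed polynomial-in-$|\Aa_q|$, exponential-in-$|\Aa_r|$ blow-up.
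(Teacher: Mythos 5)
Your overall plan (two explicit simulations between covering controllers, plus a state count) is the same as the paper's, and your left-to-right direction essentially coincides with its construction of $\red\Cc$ from $\Cc$: the states of $\red\Cc_q$ pair a $\Cc_q$-state with a $\Cc_r$-state and a phase, and the $r$-local strategy announced by $\ch(f)$ is the one read off from the controllable choices of $\Cc_r$. (Even there you skip one point the paper has to handle: a maximal run of $\red\Cc$ in which $q$ acts forever may project to a \emph{non-maximal} run of $\Cc$ because $r$ is starved; the paper repairs this by appending a worst-case completion of $r$'s local actions before invoking correctness of $\Cc$.)

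The genuine gap is in the right-to-left direction. You give $\Cc_r$ states of the form $(c_r,f)$, i.e.\ a current state plus the strategy in force, and claim that at a $q$--$r$ synchronisation ``both components agree on the new strategy\dots because $\red\Cc$ is deterministic.'' Determinism does not resolve the real difficulty: in $\red\Aa$ all of $r$'s local moves are serialized onto the single process $q$, whereas in $\Aa$ the processes $q$ and $r$ evolve \emph{asynchronously} between two synchronisations. After the last synchronisation, $q$ may perform many local actions and communications with other processes, advancing the state of $\red\Dd_q$, while $r$ independently performs local actions; at the next synchronisation the two components must agree on a single $\red\Dd_q$-state, which depends on how these two interleaved histories are merged. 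Your $(c_r,f)$ does not record enough to do this: two different local $r$-sequences can lead to the same pair $(c_r,f)$ yet to different states of $\red\Dd_q$ (the controller may carry extra memory), and in any case $r$ has no access to the $q$-actions performed in the meantime. The paper's construction addresses exactly this: $\Dd_r$ stores the pair $(d_q,x)$ of the $\red\Dd_q$-state at the last synchronisation together with the \emph{full sequence} $x$ of local $r$-actions performed since; at the next synchronisation on $b$ the transition replays $x$ from the \emph{current} $q$-side state $d^1_q$ (via $d^1_q\act{x}d'_q\act{b}d''_q$), and a separately proved invariant (Property~(*)) guarantees that the $s_r$- and $f$-components of the projections of $d_q$ and $d^1_q$ coincide, so this replay is legal. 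Relatedly, your claim that maximal runs are in bijection ``by inserting the $\ch$-steps at the right positions'' is too quick: the correspondence only holds up to trace equivalence, and one must first normalise runs of $\Dd$ to a canonical interleaving in which each block of local $r$-actions is pushed to just before the next synchronisation (the paper's $\slow_r$ normal form and the map $\chi$), and then prove separately that maximality is preserved. Without the extra bookkeeping in $\Dd_r$, the invariant, and the reordering lemma, the right-to-left simulation is not well defined.
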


We end with some notations used in the following sections. For a
Zielonka automaton $\red\Cc$ over $(\red\S,loc)$ and $w
\in(\red\S)^\infty$ we will write $\red\run(w)$ for the sequence of transitions of $\red\Cc$ when
  reading $w$. For finite $w$ we will write $\state(w)$ for the last state in
  $\red\run(w)$.

\subsection{Proof of Theorem~\ref{thm:correctness}: from $\Cc$
                                to $\red\Cc$}\label{sec:C}

By Lemma \ref{lemma:covering} we can assume that we have a
\emph{correct covering controller}  $\Cc$ for $\Aa$. We show how to construct
a correct controller $\red\Cc$ for $\red\Aa$. This will give the left
to right implication of Theorem~\ref{thm:correctness}.

\begin{remark} Some simple observations about $\Cc$.
\begin{enumerate}
\item We may assume that from every state of $C$ there is at most one
  transition on a local controllable action. If there were more than
  one, we could arbitrary remove one of them. This will reduce
  the number of maximal runs so the resulting controller with stay correct.

\item $\Cc$ determines for every state $c$ of $\Cc_r$ a local
  $r$-strategy $f$ from $\pi(c)$: if $c=c_0 \act{a_1} c_1 \act{a_2}
  \cdots \act{a_k} c_k$, $\pi(c_i)=s_i$ and $a_i \in \Sloc_r$ for all
  $i$, then $f(c_0 \cdots c_k)=a$ where $a\in\Ssys_r$ is a (unique)
  controllable action possible from $c_k$. This strategy
  may have memory, but all the (local) plays respecting
  $f$ are of bounded length, assuming that $\Aa$ is $r$-short.
\end{enumerate}
\end{remark}

The components $\red\Cc_p$ for $p\not=q$ are just $\Cc_p$, and the
initial state is the same. The
component $\red\Cc_q$ is described below. Its states are of the form
$(c_q,c_r)$, $(c_q,c_r,f)$ and $(c_q,a,c_r,f)$ with $c_q \in C_q$,
$c_r \in C_r$, $a \in \Ssys_q$, and local $r$-strategy $f$. Its
initial state is $(c^0_q,c^0_r)$, with $c^0_q,c^0_r$ initial
states of $\Cc_q,\Cc_r$. 

The transitions of $\red\Cc_q$ ensure the right choice of a local
strategy and of a local action:

\begin{itemize}
\item Choice of $r$-strategy:
  \[ (c_q,c_r)\act{\ch(f)}(c_q,c_r,f)\] where $f$ is the local
  $r$-strategy from $\pi(c_r)$ determined by $\Cc$ in state $c_r$.

\item Choice of a (local) controllable $q$-action:
  \[(c_q,c_r,f)\act{\ch(a)} (c_q,a,c_r,f)\] For $a \in \Ssys_q$ unique such that
  $c_q\act{a}c_q'$, for some $c'_q$. If there is no such transition then we
  put some arbitrary fixed action $a_0\in \Ssys_q$.
\end{itemize}

The other transitions of $\red\Cc_q$ are on uncontrollable actions,
they just reflect the structure of $\red\Aa$:
\begin{itemize}
\item Execution of the chosen controllable $q$-action:
\[ (c_q,a,c_r,f)\act{a}(c'_q,c_r,f)\qquad
\text{if $c_q\act{a}c_q'$ in $\Cc_q$.} \]

\item Execution of an uncontrollable local $q$-action:
  \[ (c_q,a,c_r,f)\act{b}(c'_q,c_r,f)\qquad \text{if $c_q\act{b}c_q'$
    in $\Cc_q$, where $b \in \Senv_q \cap \Sloc_q$.} \]

\item Communication between $q$ and $p\not=r$:
\[(c_p,(c_q,a,c_r,f))\act{b}(c_p',(c_q',c_r,f))\quad
\text{if $(c_p,c_q)\act{b} (c_p',c_q')$ in $\Cc$.}\]

\item Local move of $r$:
\[(c_q,a,c_r,f)\act{b} (c_q,a,c_r',f)\quad \text{if $c_r\act{b}c_r'$ in
  $\Cc_r$, where $b \in \Sloc_r$.} \]

\item Communication between $q$ and $r$
\[(c_q,a,c_r,f)\act{b}(c_q',c_r')\quad
\text{if $(c_q,c_r)\act{b}(c_q',c_r')$ in $\Cc$.}\]
\end{itemize}

\begin{lemma}\label{lemma:C is covering}
  If $C$ is a covering controller for $\Aa$ then $\red\Cc$  is a
  covering controller for $\red\Aa$. The covering function is
  \begin{align*}
    \red\pi(c_q,c_r)=& (\pi(c_q),\pi(c_r))\\
    \red\pi(c_q,c_r,f)=& (\pi(c_q),\pi(c_r),f)\\
    \red\pi(c_q,a,c_r,f)=& (\pi(c_q),a,\pi(c_r),f)\ .
  \end{align*}
\end{lemma}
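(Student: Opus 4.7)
The plan is to verify the two conditions of Definition~\ref{df:covering controller} for $\red\Cc$ over $\red\Aa$ with $\red\pi$ as given in the statement. Since $\red\Cc_p=\Cc_p$ and $\red\Aa_p=\Aa_p$ for every $p\neq q$, and since $\red\pi$ is the identity on these components, the covering property is inherited directly from $\Cc$ over $\Aa$ for every action whose location does not include $q$. The real work is at process $q$ and goes by case analysis on the seven kinds of transitions in $\red\Cc_q$.

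To check condition (i), I would go through the seven transition kinds of $\red\Cc_q$ one by one and check that each $\red\pi$-image is a valid transition of $\red\Aa_q$, matching the schemata \circled{1}--\circled{7}. The $\ch(f)$ and $\ch(a)$ transitions match \circled{1} and \circled{2} by construction; for $\ch(a)$ we also need $a$ to be enabled at $\pi(c_q)$ in $\Aa_q$, which holds because $a$ is chosen to be enabled at $c_q$ in $\Cc_q$ and $\Cc$ covers $\Aa$. For the other five transitions, namely execution of the chosen $a$, uncontrollable local $q$-action, local $r$-action allowed by $f$, communication between $q$ and $r$, and communication between $q$ and some $p\neq r$, each is defined from a transition in $\Cc$, and applying the covering map $\pi$ yields the $\Aa$-transition that $\red\Aa_q$ requires.

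To check condition (ii), note that the controllable actions of $\red\S$ are exactly $\ch(f)$ and $\ch(a)$, so the uncontrollable actions of $\red\S$ are all of $\S$. For each action $b\in\S$ that is enabled at $\red\pi(\bar c)$ in $\red\Aa$, I would verify it is enabled at $\bar c$ in $\red\Cc$. For an uncontrollable (in $\Aa$) local $q$-action, or a communication between $q$ and any $p\neq r$, enabledness at $\red\pi(\bar c)$ unfolds by definition of $\red\Aa$ to enabledness at $\pi(c_q)$ (and $\pi(c_p)$) in $\Aa$, so the covering of $\Cc$ over $\Aa$ delivers the corresponding transition in $\Cc$, which $\red\Cc$ inherits. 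For an uncontrollable local $r$-action or a communication between $q$ and $r$, the same argument applies using the covering of $\Cc_r$ from the observation that the $r$-part of $\red\Cc$ is unchanged.

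The subtle case, and where I expect the main obstacle, is precisely the actions that were controllable in $\Aa$ but become uncontrollable in $\red\Aa$: the execution of the chosen $a\in\Ssys_q$ (Transition \circled{3}), and a local $r$-action that is controllable in $\Aa_r$ but is allowed by $f$ in $\red\Aa$ (Transition \circled{5}). For these one cannot invoke the covering of $\Cc$ over $\Aa$ directly. The argument has to exploit that the state of $\red\Cc$ has been enriched to record exactly the right ``choice'': the state $(c_q,a,c_r,f)$ is reached only through $\ch(a)$, and $\ch(a)$ was defined to pick $a$ so that $c_q\act{a}$ exists in $\Cc_q$; and the strategy $f$ stored in a $(c_q,\cdot,c_r,f)$-state is the strategy read off from $\Cc_r$ at $c_r$, which by construction allows an action $b\in\Sloc_r$ at a descendant of $c_r$ exactly when the corresponding $\Cc_r$-transition is available. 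Hence in each of these two cases the state structure encodes precisely the witness needed for the covering inclusion to hold.
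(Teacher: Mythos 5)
The paper offers no proof of this lemma at all --- it is treated as immediate from the definitions --- so your case-by-case check of conditions (i) and (ii) of Definition~\ref{df:covering controller} is exactly the verification the authors intend, and you correctly isolate the crux: the actions that are controllable in $\Aa$ but uncontrollable in $\red\Aa$ (transitions \circled{3} and \circled{5}), for which covering of $\Cc$ cannot be invoked and one must instead use that the declared action $a$ and the stored strategy $f$ were read off from $\Cc_q$ and $\Cc_r$.

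One corner is not literally covered by your justification: the fallback clause in the definition of the $\ch(a)$-transition of $\red\Cc_q$. When $c_q$ has \emph{no} outgoing controllable transition in $\Cc_q$, the construction declares an arbitrary fixed $a_0\in\Ssys_q$, and then both of your key claims --- that ``$a$ is chosen to be enabled at $c_q$ in $\Cc_q$'' and that a state $(c_q,a,c_r,f)$ is only reached through a $\ch(a)$ that witnesses $c_q\act{a}$ in $\Cc_q$ --- fail. Concretely, if $a_0$ happens to be enabled from $\pi(c_q)$ in $\Aa_q$, then the (now uncontrollable) action $a_0$ is enabled from $\red\pi(c_q,a_0,c_r,f)$ via \circled{3} but not from $(c_q,a_0,c_r,f)$ in $\red\Cc_q$, so condition (ii) breaks at that state. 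This is really a rough edge of the paper's own construction rather than of your argument (it is repaired, for instance, by declaring a dummy action that is never executable, or by restricting attention to the case where the declared action is executable in $\Cc_q$), but since your proof leans on precisely the property that fails there, you should treat this case explicitly. A similar bookkeeping point deserves a sentence: for condition (i) on local $r$-moves you need the $f$-component to evolve identically in $\red\Cc_q$ and in transition \circled{5} of $\red\Aa_q$ (i.e., whether $f$ is replaced by $f_{|b}$), which should be stated rather than left implicit.
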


For the correctness proof we will need one more definition:

\begin{definition}[$\hide$]
  For $w\in(\red\S)^\infty$ we let $\hide(w)\in\S^\infty$ be
  the sequence obtained by removing actions from $\red\S
  \setminus\S$.
\end{definition}

Observe that by construction of $\red\Cc$ if $\red\run(w)$ is defined
then in $w$ there can be at most two consecutive $q$-actions from
$\red\S \setminus \S$.

\begin{lemma}\label{lemma:C invariant}
  Let $w\in (\red\S)^*$. If $\red\run(w)$ is defined then so is
  $\run(\hide(w))$.  Moreover, letting $\red c=\red\state(w)$ and
  $c=\state(\hide(w))$, we have that
 (i) $\red c_p=c_p$  for all $p\not=q,r$, and (ii) 
  $\red c_q$ is either $(c_q,c_r)$, or $(c_q,c_r,f)$, or
  $(c_q,a,c_r,f)$; where $a$ and $f$ are 
  determined by $c_q$ and $c_r$ as follows:
  \begin{itemize}
  \item $a$ is the unique controllable $q$-action from $c_q$ in $\Cc$ (or
    $a_0$ if there is none).
  \item $f$ is the local $r$-strategy determined by $\Cc$
    in $c_r$.
  \end{itemize}
\end{lemma}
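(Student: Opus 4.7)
The plan is to prove this lemma by straightforward induction on the length of $w$, with the inductive step proceeding by case analysis on the last action of $w$. The invariants are pre- and post-conditions that are preserved by each transition of $\red\Cc_q$, and the construction of $\red\Cc$ was explicitly designed so that every transition shadows (or is a no-op on) a transition of $\Cc$.

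For the base case $w=\varepsilon$, $\red\run(\varepsilon)$ lands in the initial state whose $q$-component is $(c^0_q,c^0_r)$, so it has the form $(c_q,c_r)$ required by (ii); $\hide(\varepsilon)=\varepsilon$, so $\run(\hide(w))$ is defined and (i) is immediate. For the inductive step, write $w=w'b$; by IH $\red\run(w')$ and $\run(\hide(w'))$ are defined with the claimed invariants, and we check each possible $b\in\red\S$:

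First, if $b\in\red\S\setminus\S$, that is $b=\ch(f)$ or $b=\ch(a)$, then $\hide(w)=\hide(w')$, so nothing to check for (i), and the $\red\Cc_q$-transition is local to $q$, leaves $p\ne q,r$ untouched, and by definition is defined only from a state of the form $(c_q,c_r)$ (for $\ch(f)$) or $(c_q,c_r,f)$ (for $\ch(a)$); the IH on the shape of $\red c'_q$ thus guarantees applicability, and the update matches the invariant by the very definition of the transition ($f$ is picked as the strategy determined by $\Cc$ at $c_r$; $a$ is the unique controllable $q$-action from $c_q$ or $a_0$). Second, if $b\in\S$ with $\loc(b)\cap\{q,r\}=\varnothing$, the transition of $\red\Cc$ on $b$ uses only $\red\Cc_p=\Cc_p$ for $p\ne q$, so the corresponding $\Cc$-transition is defined; the $q,r$-components are unchanged, preserving (ii). Third, if $b$ is a local $q$-action, a local $r$-action, a $(q,p)$-communication with $p\ne r$, or a $(q,r)$-communication, each of the clauses in the definition of $\red\Cc_q$ explicitly requires that the corresponding transition exists in $\Cc$ (or $\Cc_q$, $\Cc_r$), which gives the extension of $\run(\hide(w'))$ by $b$; the updates of $c_q$, $c_r$, $a$, $f$ prescribed by the clauses match exactly what (ii) demands (in particular $(q,r)$-communication resets to the $(c_q,c_r)$ shape, consistent with the fact that a $\ch(f)$ is needed afterwards).

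The only delicate point, and the main obstacle, is the local $r$-action case: here $c_r$ is updated to $c'_r$ in $\Cc_r$ while $f$ in $\red c_q$ is kept unchanged, so we must see that $f$ is still the local $r$-strategy determined by $\Cc$ at $c'_r$ in the sense of Remark~1. This is exactly the reason one defines the strategy associated with a $\Cc_r$-state through its history of local $r$-actions: following a local $r$-transition $c_r\act{b}c'_r$ in $\Cc_r$, the strategy at $c'_r$ is the residual of the strategy at $c_r$, and the unique controllable actions prescribed by $\Cc$ along the $r$-local continuation from $c'_r$ coincide with the values $f$ takes on sequences starting with $b$; since $b$ was by hypothesis allowed by $f$, the new $c_r$-component and $f$ together describe the same future choices as $\Cc$. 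Combined with the trivial preservation of $a$ and of components for $p\ne q,r$, this completes the induction.
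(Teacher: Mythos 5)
Your proof is correct and follows exactly the approach the paper takes: the paper's own proof of this lemma is a one-line ``induction on the length of $w$, by direct examination of the rules,'' and your argument is simply that induction carried out case by case. You also correctly isolate the only non-routine case (a local $r$-move, where one must see that the kept/residuated strategy component still agrees with the strategy $\Cc$ determines at the new $\Cc_r$-state), which the paper leaves implicit.
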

\begin{proof}
  The proof is by induction on the length of $w$. It follows by direct
  examination of the rules.
\end{proof}

\begin{lemma}\label{lemma:projecting-runs}
  Assume that $w\in( \red \S)^\infty$. For every process $p\not=q$ we have
  $\red\run_p(w)=\run_p(\hide(w))$. Concerning
  $\red\run_q(w)$: if we project it on transitions of $\Cc_q$ we obtain
  $\run_q(\hide(w))$; if we project it  on transitions of
  $\Cc_r$ we obtain $\run_r(\hide(w))$.
\end{lemma}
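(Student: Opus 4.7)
The plan is a straightforward induction on the length of $w$, using Lemma~\ref{lemma:C invariant} to guarantee that each prefix produces a well-defined run in both $\red\Cc$ and $\Cc$. For infinite $w$ the claim then follows by taking the limit, since each projection is determined finitely many letters at a time. So it suffices to verify the three equalities for finite $w$, and moreover only the effect of appending one letter $x\in\red\S$ to a word $w$ for which the claim already holds.

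I would split the case analysis according to whether $x$ is a ``new'' action (one of $\ch(f)$ or $\ch(a)$ in $\red\S\setminus\S$) or an ``old'' action in $\S$. If $x\in\red\S\setminus\S$, then $\hide(wx)=\hide(w)$, and by inspection of the transition rules of $\red\Cc_q$ such an $x$ modifies only the ``annotation'' in $\red\Cc_q$'s state (adding an $f$- or $a$-component) without producing any transition of $\Cc_p$, $\Cc_q$, or $\Cc_r$; so none of the three projections change, exactly as required. If $x\in\S$, then $\hide(wx)=\hide(w)x$, and each transition rule of $\red\Cc$ on $x$ was defined so as to trigger precisely one corresponding transition of $\Cc$ on the same letter: a local $p$-move for $p\neq q,r$ triggers the identical transition of $\Cc_p$; a $q$-local move or a $q$-$p'$ synchronization ($p'\neq r$) triggers the corresponding transition of $\Cc_q$, leaving $c_r$ unchanged; a local $r$-move triggers the corresponding transition of $\Cc_r$, leaving $c_q$ unchanged; and a $q$-$r$ synchronization triggers both the corresponding $\Cc_q$- and $\Cc_r$-transitions simultaneously. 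In each case the three projections of $\red\run(wx)$ are extended in exactly the same way as the projections of $\run(\hide(w)x)$ on the processes that participate in $x$, and unchanged on the others.

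Concretely, for $p\neq q$ this gives $\red\run_p(wx)=\run_p(\hide(wx))$ because transitions of $\red\Cc_p$ coincide with transitions of $\Cc_p$ and neither $\ch(f)$ nor $\ch(a)$ involves $p$. For the projection of $\red\run_q(w)$ on $\Cc_q$-transitions we use Lemma~\ref{lemma:C invariant}: the $q$-component $c_q$ of $\red\state(\cdot)$ evolves only on letters from $\S_q$, and on each such letter exactly as $\state_q(\hide(\cdot))$ does. Similarly, the $c_r$-component evolves only on letters from $\S_r$, and its evolution matches $\run_r(\hide(w))$ step for step, yielding the third equality.

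The only minor subtlety is to make sure that the ``annotation'' transitions $\ch(f)$ and $\ch(a)$ indeed contribute no $\Cc_q$- or $\Cc_r$-transition to the projections. This is immediate from the definition of $\red\Cc_q$: these rules only change the third/fourth coordinate, and their target $(c_q,c_r,f)$ or $(c_q,a,c_r,f)$ has the same $c_q$ and $c_r$ as the source. There is no real obstacle here; the lemma is essentially a bookkeeping statement that records, cleanly, the intuitive design of $\red\Cc$.
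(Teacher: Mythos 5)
Your argument is correct and matches the paper's approach: the paper simply states that the lemma follows directly from Lemma~\ref{lemma:C invariant}, and your induction with the case split on new versus old actions is exactly the routine verification being elided there.
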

\begin{proof}
  Directly from the previous lemma.
\end{proof}


\begin{lemma}\label{lemma:C correct}
  If $\Cc$ is a correct covering controller for $\Aa$ then $\red\Cc$
  is a correct covering controller for $\red\Aa$.
\end{lemma}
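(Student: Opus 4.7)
The plan is to take an arbitrary maximal labelling $w \in (\red\S)^\infty$ of a run $\red\run(w)$ of $\red\Cc$, set $u=\hide(w)$, and examine the induced run $\run(u)$ of $\Cc$ supplied by Lemma~\ref{lemma:C invariant}. By Lemma~\ref{lemma:projecting-runs}, for every $p\neq q,r$ the $p$-projection of $\red\run(w)$ coincides with $\run_p(u)$, while the $\Aa_q$- and $\Aa_r$-projections of $\red\run_q(w)$ coincide with $\run_q(u)$ and $\run_r(u)$ respectively. Hence any correctness statement I can establish about $\run(u)$ in $\Cc$ transfers verbatim to the corresponding projection in $\red\Cc$. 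Covering of $\red\Cc$ is already given by Lemma~\ref{lemma:C is covering}, so only correctness remains.

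The core argument splits according to whether $\run(u)$ is maximal in $\Cc$. In the infinite maximal case, the correctness of $\Cc$ gives $\Cor_p$ for every $p$, from which $\Cor_p$ for $p\neq q$ in $\red\Aa$ is immediate. For process $q$ in $\red\Aa$ the $\Aa_q$-part follows from $\Cor_q$ on $\run_q(u)$; for the $r$-part, if the $\Aa_r$-projection of $\red\run_q(w)$ is infinite then clause (i) follows from $\Cor_r$ on $\run_r(u)$, and if it is finite then the pair $(c_r,f)$ in the $q$-state of $\red\Cc$ stabilises to some $(c_r^\infty,f^\infty)$, where $f^\infty$ is exactly the strategy determined by $\Cc$ at $c_r^\infty$. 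Clause (ii) then follows because any $f^\infty$-respecting play from $\pi(c_r^\infty)$ lifts to a local play in $\Cc_r$ (covering), is finite by the $r$-shortness of $\Aa$, and can be grafted onto the tail of $\run_r(u)$ without disturbing the other projections; correctness of $\Cc$ therefore forces its endpoint into $T_r$. The finite maximal case is symmetric: $\red\run(w)$ can stop only in a state of the form $(c_q,c_r,f)$, and the covering property combined with correctness of $\Cc$ gives $\pi(c_q)\in T_q$ and $\pi(c_r)\in T_r$, so the final state lies in $\red T_q$.

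The main obstacle I anticipate is that $\run(u)$ may fail to be maximal in $\Cc$ even when $\red\run(w)$ is maximal in $\red\Cc$, because the enforced scheduling through $\ch(f)$ and $\ch(a)$ in $\red\Cc$ can block interleavings that $\Cc$ would otherwise allow. To handle this I plan a small commutation argument: any action $a\in\S$ that could extend $\run(u)$ at some position can be replayed in $\red\Cc$ at a compatible position by first inserting the appropriate $\ch$-choices, which are always available from the uncommitted states $(c_q,c_r)$ and $(c_q,c_r,f)$; this would contradict maximality of $\red\run(w)$. The only case requiring extra care is when the missing action is $r$-local, where I fall back on the finite $f^\infty$-play argument outlined above, using $r$-shortness to ensure the extension terminates and covering plus correctness of $\Cc$ to force the right endpoint. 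The deliberate ordering in Section~\ref{sec:new} that forces $q$ to commit via $\ch(a)$ before any $r$-action is exactly what makes this replay always possible.
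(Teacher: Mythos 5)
Your proposal is correct and follows essentially the same route as the paper: covering is delegated to Lemma~\ref{lemma:C is covering}, the projections are transferred via Lemmas~\ref{lemma:C invariant} and~\ref{lemma:projecting-runs}, the correctness condition for $q$ in $\red\Aa$ is checked by the same three-way case split, and you identify the one genuine obstacle exactly as the paper does, namely that $\run(\hide(w))$ can fail to be maximal only when $w$ has infinitely many $q$-actions but finitely many $r$-actions, repairing it with $r$-shortness plus correctness of $\Cc$ on a completed run. The only cosmetic difference is that the paper inserts a single explicit maximal (losing-if-possible) local $r$-sequence $x$ into $w$ and argues about $u=v_1xv_2$, whereas you graft arbitrary maximal $f$-respecting plays onto $\hide(w)$ to establish clause~(ii) directly; these amount to the same argument.
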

\begin{proof}
  Since $\Cc$ is a correct covering controller we have that
  all maximal runs of $\Cc$ are correct w.r.t~$\Aa$.  By
  Lemma~\ref{lemma:C is covering} we 
  know that $\red \Cc$ is a covering controller,
  so it is enough to show that all maximal runs of $\red\Cc$ are
  correct w.r.t.~$\red\Aa$.

  Take a maximal run in $\red\Cc$, say on $w\in (\red\S)^\infty$. The
  first obstacle is that $\run(\hide(w))$ may be not maximal in
  $\Cc$. This can only happen when there are infinitely many
  $q$-actions in $w$, but only finitely many $r$-actions. Then we have
  $w=v_1v_2$ and there are no $r$-actions in $v_2$. Let
  $\red\state_q(v_1)=(c_q,a,c_r,f)$. We have that $c_r$ and $f$ appear
  in all $\red\state_q(v_1v')$, for every prefix $v'$ of $v_2$. The
  run $\run(\hide(w))$ is not maximal when there is at least some
  local action of $\Cc_r$ enabled in $c_r$. Let $x$ be a maximal
  sequence of local $r$-actions that is possible in $\Cc_r$ from state
  $c_r$. Since $\Aa$ is $r$-short, every such sequence is
  finite. Moreover we choose $x$ in such a way that it brings $\Cc_r$
  into a state not in $T_r$ (if it is possible).  We get that
  $u=v_1xv_2$ also defines a maximal run of $\red\Cc$, but now the run
  on $\hide(u)$ is maximal in $\Cc$. Notice that $\red\run(u)$
  satisfies $\red\Cor$ iff $\red\run(w)$ does: the difference is the
  sequence $x$, and we have chosen, if possible, a losing sequence.

  We need to show that the run of $\red\Aa$ on $u$ satisfies
  $\red\Cor$ using the fact that the run on $\hide(u)$ satisfies
  $\Cor$.  For $p\not=q$, Lemma~\ref{lemma:projecting-runs} tells us
  that $\red\run_p(u)$ is the same as $\run_p(\hide(u))$. Since
  $\red\Cor_p$ and $\Cor_p$ are the same, we are done.

  It remains to consider $\red\run_q(u)$. If there are finitely many
  $q$-actions  in $u \in (\red\S)^\infty$ then  $u=u_1u_2$ with no
  $q$-action in $u_2$. Consider
  $\red\state_q(u_1)=(c_q,a,c_r,f)$. We have that
  $\state_q(\hide(u_1))=c_q$ and $\state_r(\hide(u_1))=c_r$. As there
  are no $q$-actions in $u_2$, and $\run_q(u)$ satisfies
  $\Cor_q$, we must have $\pi(c_q)\in T_q$ and $\pi(c_r)\in T_r$. This shows that
  $\red\run_q(u)$ satisfies $\red\Cor_q$.

  If there are infinitely many $q$-actions in $u\in(\red\S)^\infty$, we
  still have two cases. The first is when there are infinitely many
  actions from $\S_r$ as well. Then $\red\run_q(u)$
  satisfies $\red\Cor_q$ if the corresponding runs $\run_q(\hide(u))$
  and $\run_r(\hide(u))$ satisfy $\Cor_q$ and $\Cor_r$,
  respectively. This is guaranteed by our assumption that $\run(\hide(u))$
  satisfies $\Cor$.

  The last case is when in $u\in(\red\S)^\infty$ we have infinitely
  many $q$-actions and only finitely many actions from $\S_r$. Then $u=u_1u_2$
  with no actions from $\S_r$ in $u_2$. We get $\red\state_q(u_1)=(c_q,a,c_r,f)$
  with both $c_r$, $f$ appearing in all the further states of the
  run. Since $\run_r(\hide(u))$ satisfies $\Cor_r$, we have that
  $\pi(c_r) \in T_r$. But then, by the construction of $u$, there is
  no $\S_r$-transition possible from $c_r$ (and neither from
  $\pi(c_r)$ in $\red\Aa_r$, since $\red\Cc$ is covering). This means that $\red\run_q(u)$
  satisfies $\red\Cor_q$.

\end{proof}

\subsection{Proof of Theorem~\ref{thm:correctness}: from $\red \Dd$
                                to $\Dd$}\label{sec:D}
This subsection gives the
right-to-left direction of the proof. Given a correct controller $\red\Dd$ for
$\red\Aa$, we show how to construct a correct controller $\Dd$ for
$\Aa$. By Lemma~\ref{lemma:covering} we can assume that $\red\Dd$ is
covering.


The components $\Dd_p$ for $p\not=q,r$ are the same as in
$\red\Dd$. So it remains to define
$\Dd_q$ and $\Dd_r$. 
The states of $\Dd_q$ and $\Dd_r$ are obtained from states of
$\red\Dd_q$. We need only certain states of $\red\Dd_q$,
namely those $d_q$ whose projection $\red\pi(d_q)$ in $\red\Aa_q$ has four
components, we call them \emph{true states} of $\red\Dd_q$:
\begin{equation*}
  \ts(\red\Dd_q)=\set{d_q\in \red\Dd_q \mid \red\pi(d_q)\ \text{is of the
      form $(s_q,a,s_r,f)$}}.
\end{equation*}
Figure~\ref{fig:d-simulation}
presents an execution of $\red\Aa$ controlled by $\red\Dd$. We can see
that $d_2$ is a true state, and $d_3$ is not.

The set of states of $\Dd_q$ is just $\ts(\red\Dd_q)$, while the states of
$\Dd_r$ are pairs $(d_q,x)$ where $d_q$ is a state from
$\ts(\red\Dd_q)$ and $x \in (\Sloc_r)^*$ is a sequence of local
$r$-actions  that is possible from $d_q$ in $\red\Dd$, in
symbols $d_q\act{x}$. We will argue later that such sequences are
uniformly bounded. The
initial state of $\Dd_q$ is the state $d^1_q$ reached from the initial state
of $\red\Dd_q$ by the (unique) transitions of the form
$\ch(f_0),\ch(a_0)$. The initial state of $\Dd_r$ is $(d^1_q,\e)$. 
The local transitions for $\Dd_r$ are 
  $(d_q,x)\act{b}(d_q,xb)$, for every $b\in\S^{loc}_r$
    and $d_q\act{xb}$.

Before defining the transitions of $\Dd_q$ let us observe that if 
$d_q\in\red\Dd_q$ is not in $\ts(\red\Dd_q)$ then only one
controllable transition is possible from it. Indeed, as $\red\Dd$ is a
covering controller, if $\red\pi(d_q)$ is of the form $(s_q,s_r)$ then
there can be only an outgoing transition on a letter of the form
$\ch(f)$. Similarly, if $\red\pi(d_q)$ is of the form $(s_q,s_r,f)$
then only a $\ch(a)$ transition is possible. Since both $\ch(f)$ and
$\ch(a)$ are controllable, we can assume that in $\red\Dd_q$ there is
no state with two outgoing transitions on a letter of this form. For a
state $d_q\in\red\Dd_q$ not in $\ts(\red\Dd_q)$ we will denote by
$\ts(d_q)$ the unique state of $\ts(\red\Dd_q)$ reachable from $d_q$
by one or two transitions of the kind $\act{\ch(f)}$ or $\act{\ch(a)}$,
depending on the cases discussed above. Going back to
Figure~\ref{fig:d-simulation}, we have $\ts(d_3)=d_4$. 

We now describe the $q$-actions possible in $\Dd$.
\begin{itemize}
\item Local $q$-action $b \in\Sloc_q$: $d_q\act{b} \ts(d'_q)$ if
  $d_q\act{b} d'_q$ in $\red\Dd_q$. For example, this gives a
  transition $d_1\act{b_q} d_4$ in Figure~\ref{fig:d-simulation}.

\item Communication $b \in\S_q \cap \S_p$ between $q$ and $p\not=r$: $(d_p,d_q)\act{b}
  (d'_p,\ts(d'_q))$ if $(d_p,d_q)\act{b}(d'_p,d'_q)$ in $\red\Dd$.

\item Communication $b \in\S_q \cap \S_r$ of $q$ and $r$:
  $(d^1_q,(d^2_q,x))\act{b}(\ts(d''_q),(\ts(d''_q),\e))$ 
if $d^1_q \act{x} d'_q \act{b} d''_q$ in $\red\Dd_q$;
observe that $\act{x}$ is a sequence of transitions. For example, this gives a
  transition on $b$ in Figure~\ref{fig:d-simulation}.
\end{itemize}
In the last item the transition does not depend on $d^2_q$ since,
informally, $d^1_q$ has been reached from $d^2_q$ by a sequence of
actions independent of $r$. The condition $d^1_q \act{x} d'_q \act{b}
d''_q$ simulates the order of actions where all local $r$-actions come
after the other actions of $q$, then we add a communication between
$q$ and $r$.

\begin{figure}[tbp]
\includegraphics[scale=.7]{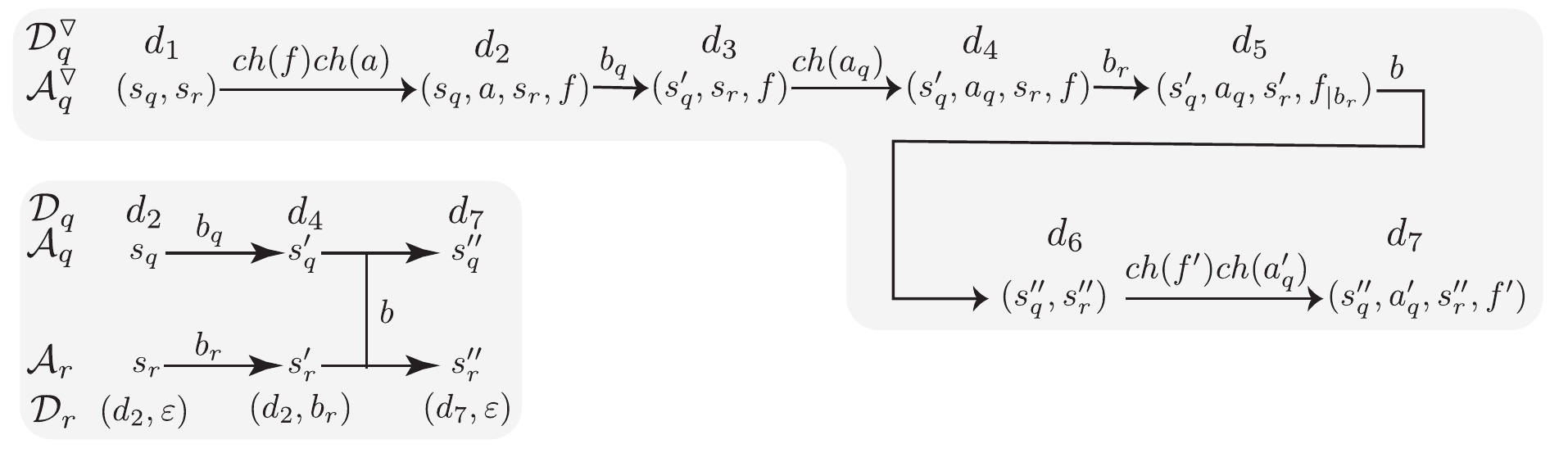}
  \caption{Decomposing controller $\red\Dd_q$ into $\Dd_q$ and
    $\Dd_r$.}\label{fig:d-simulation}
\end{figure}

The next lemma says that $\Dd$ is a covering controller for
$\Aa$. Since $\Aa$ is assumed to be $r$-short, the lemma also gives a
bound on the length of sequences in the states of $\Dd_r$. 
\begin{lemma}\label{lem:Dcovering}
  If $\red\Dd$ is a covering controller for $\red \Aa$ then $\Dd$ is a
  covering controller for $\Aa$.
\end{lemma}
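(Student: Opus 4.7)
The plan is to verify the two requirements of a covering controller for $\Aa$: the existence of a covering map $\pi_\Dd$ that (i) respects transitions and (ii) never disables an uncontrollable action of $\Aa$ that is enabled at the projected plant state.

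First I would define the covering map. For $p \neq q,r$, set $\pi_\Dd = \red\pi$ on $\Dd_p = \red\Dd_p$. For $q$, a state $d_q \in \ts(\red\Dd_q)$ has $\red\pi(d_q) = (s_q, a, s_r, f)$; define $\pi_\Dd(d_q) = s_q$. For $r$, a state $(d_q, x)$ with $\red\pi(d_q) = (s_q, a, s_r, f)$ is mapped to the unique $\Aa_r$-state reached from $s_r$ by $x$ — which exists because $x$ is an enabled sequence of local $r$-actions in $\red\Dd$ and $\red\Dd$ is covering over $\red\Aa$, where local $r$-actions simulate $\Aa_r$ (transition \circled{5}). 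Since $\Aa$ is $r$-short, any such $x$ has length bounded by the depth of $\red\Aa_r$'s strategy tree, so $\Dd_r$ is finite.

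For condition (i), I would check each transition kind in the definition of $\Dd$. Local $q$-actions $d_q \act{b} \ts(d'_q)$ come from $d_q \act{b} d'_q$ in $\red\Dd_q$; since $\red\Dd$ covers $\red\Aa$, we have $\red\pi(d_q) \act{b} \red\pi(d'_q)$ in $\red\Aa_q$, i.e.\ transition \circled{4} for uncontrollable local $b$, or the combination with \circled{1},\circled{2} after $\ts$ collapses the intermediate $\ch(\cdot)$ moves — either way the underlying $\Aa_q$-transition $s_q \act{b} s'_q$ is witnessed by the construction of $\red\Aa$. Communications with $p \neq r$ reduce identically via \circled{7}. For local $r$-actions $(d_q, x) \act{b} (d_q, xb)$, the condition $d_q \act{xb}$ together with \circled{5} yields $s_r \act{xb}$ in $\Aa_r$, which is exactly what we need. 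For a $q$-$r$ communication $(d^1_q, (d^2_q, x)) \act{b} (\ts(d''_q), (\ts(d''_q), \varepsilon))$, unfolding $d^1_q \act{x} d'_q \act{b} d''_q$ and applying $\red\pi$ to each step along \circled{5} and \circled{6} produces the required $\Aa$-transition $(s_q, s_r') \act{b} (s''_q, s''_r)$, where $s_r'$ is the state reached from $s_r$ by $x$.

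For condition (ii) I must show every uncontrollable $\Aa$-action enabled from $\pi_\Dd(c)$ is enabled from $c$ in $\Dd$. All of $\Sloc_r$, $\S_q \cap \Sloc_q \cap \Senv$, and all communication actions $\S_q \cap \S_{p}$ are uncontrollable in $\red\Aa$ by construction; since $\red\Dd$ covers $\red\Aa$, they are enabled from the corresponding $\red\Aa$-state in $\red\Dd$, and the construction of $\Dd$ simply inherits those transitions (using $\ts$ to skip over the deterministic $\ch(\cdot)$ moves). The only subtle case is the communication between $q$ and $r$: if $(s_q, s_r') \act{b}$ is enabled in $\Aa$, we need to argue that starting from $(d^1_q, (d^2_q, x))$ with $\pi_\Dd(d^1_q) = s_q$ and $s_r \xrightarrow{x} s_r'$, the sequence $x$ can indeed be executed in $\red\Dd_q$ and the ensuing $b$-transition is available; this follows because $b$ is uncontrollable in $\red\Aa_q$ (transition \circled{6}) so covering forces $\red\Dd_q$ to enable it from the state reached after $x$.

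The main obstacle is the $q$-$r$ communication clause, where one has to synchronize the local sequence $x$ recorded in the $\Dd_r$-state with the $q$-component of $\red\Dd$: one must ensure that the transition $d^1_q \act{x} d'_q$ exists in $\red\Dd_q$ precisely when $(d^2_q, x)$ is a legitimate $\Dd_r$-state, and that no extra information carried by $d^2_q$ is needed. This is why the definition of $\Dd_r$ pairs the sequence with $d_q$ itself, and why local $r$-moves in $\Dd_r$ update this pair by appending to $x$ while preserving $d_q$; with that invariant the simulation goes through, and together with $r$-shortness the lemma follows.
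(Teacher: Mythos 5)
Your proposal follows the paper's proof essentially step by step: the same covering map (sending $(d_q,x)$ to the $\Aa_r$-state reached by $x$ from the $s_r$ recorded in $\red\pi(d_q)$), the same case analysis over the four kinds of $q$- and $r$-transitions for both the transition-preservation and the controllability conditions, and the same identification of the $q$--$r$ communication as the delicate case. The only ingredient the paper makes explicit that you leave implicit is the invariant (its ``Property (*)'', proved by induction on the length of the run) that in every reachable configuration the $s_r$- and $f$-components of $\red\pi$ of the current $q$-state $d^1_q$ and of the $q$-state $d^2_q$ recorded in the $r$-component coincide; this, combined with the fact that local $r$-actions are uncontrollable in $\red\Aa$ and $\red\Dd$ is covering, is precisely what transfers executability of the sequence $x$ from $d^2_q$ to $d^1_q$, so you should state and prove it rather than only gesture at it.
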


\begin{proof}
  We need to define the projection function $\pi$ using the projection
  function $\red\pi$. For $p\not=q,r$ set $\pi=\red\pi$. For $\Dd_q$
  we define $\pi(d_q)=s_q$ where $s_q$ is the state of $\Aa_q$ in
  $\red\pi(d_q)$.  For $\Dd_r$ and its state $(d_q,x)$ we define
  $\pi(d_q,x)=s'_r$ where $d_q\act{x}d'_q$ and $s'_r $ is the state of $\Aa_r$ in
  $\red\pi(d'_q)$.

  We need to check that the transitions defined above preserve this
  projection function; namely for every process $p$: if $d_p\act{b}
  d'_p$ in $\Dd_p$ then $\pi(d_p)\act{b}\pi(d'_p)$ in $\Aa_p$; and
  similarly for communication actions.  The statement is obvious if
  the move is in components other than $q$ or $r$. We are left with
  four cases:
  \begin{itemize}
  \item Local move of $q$, namely $d_q\act{b}d'_q$. We have
    $d_q\act{b}d''_1\act{\ch(a')} d'_q$ in $\red\Dd$ for some $a'$, since
    $d'_q=\ts(d''_q)$.  By the fact that
    $\red\Dd$ covers $\red\Aa$ and the definition of moves of the
    latter automaton we
    have in $\red\Aa$:
    \begin{equation*}
      \red\pi(d_q)=\struct{s_q,a,s_r,f}\act{b}\struct{s'_q,s_r,f}\act{\ch(a')}
      \struct{s'_q,a',s_r,f}=\red\pi(d'_q) \,,
    \end{equation*}
    and by definition of $\red\Aa$ we know that $s_q\act{b}s'_q$ is in $\Aa$.
  \item Communication between $q$ and $p\not=r$ is similar.
  \item Local move of $r$: $(d_q,x)\act{b}(d_q,xb)$. By definition we
    know that from $d_q$ it is possible to do in $\red\Dd$ the sequence of actions
    $xb$, that is $d_q\act{x} d^1_q\act{b}d^2_q$. \anca{old: We have
    $\red\pi(d_q)=(s_q,a,s_r,f)$, $\red\pi(d^1_q)=(s_q,a,s^1_r,f)$
    and $\red\pi(d^2_q)=(s_q,a,s^2_r,f)$;} 
    We have
    $\red\pi(d_q)=(s_q,a,s_r,f)$, $\red\pi(d^1_q)=(s_q,a,s^1_r,f|_x)$
    and $\red\pi(d^2_q)=(s_q,a,s^2_r,f|_{xb})$; since $xb$ is a sequence of
    local $r$-actions the other components do not change. We have
    $s^1_r\act{b}s^2_r$ by definition of $\red\Aa_q$, and $\red\pi(d_q,x)=s^1_r$,
    $\red\pi(d_q,xb)=s^2_r$, as required.  
  \item Communication between $q$ and $r$:
    $(d^1_q,(d^2_q,x))\act{b}(\ts(d''_q),(\ts(d''_q),\e))$. By
    definition this is possible only  when $d^1_q \act{x} d'_q \act{b} d''_q$ in
    $\red\Dd_q$. Since $\red\Dd$ is covering we get the following
    sequence of transitions in $\red\Aa$: \anca{changed f in f' below,
      f' in g}
    \begin{gather*}
      \red\pi(d^1_q)=\struct{s_q,a,s_r,f}\act{x}
      \struct{s_q,a,s^1_r,f|_x}\act{b}\struct{s'_q,s'_r}
      \act{\ch(g)}\\
      \struct{s'_q,s'_r,g}\act{\ch(a')}\struct{s'_q,a',s'_r,g}=
      \red\pi(\ts(d''_q)) 
    \end{gather*}
  \end{itemize}
  So we have $(s_q,s^1_r)\act{b}(s'_q,s'_r)$ in $\Aa$ and
  $\pi(d^1_q)=s_q$, $\pi(\ts(d''_q))=s'_q$, $\pi(\ts(d''_q),\e)=s'_r$.
  We claim that $\pi(d^2_q,x)=s^1_r$, and for this
   we need to observe a property of the runs of
  $\Dd$ (proved by induction on the length of the run). The intuition
  for the property below is that $d_q$ was reached from $d'_q$ by
  actions that do not involve $r$. \anca{added ``The intuition..''}

  \begin{quote}
    \textbf{Property (*)} If from the initial state $\Dd$ can reach a
  global state with $d_q$ and $(d'_q,x)$ at the coordinates
  corresponding to $q$ and $r$, respectively, then the $s_r$- and $f$-components of the $\red\pi$ projections of $d_q$ and $d'_q$ are the same:
 $\red\pi(d_q)=(s_q,a,s_r,f)$ and
  $\red\pi(d'_q)=(s'_q,a',s_r,f)$, for some $s_q,s'_q,a,a',s_r,f$.

  \end{quote}

  From Property (*) it follows that $\pi(d^2_q,\e)=s_r$, hence
  $\pi(d^2_q,x)=s^1_r$ since $s_r\act{x} s^1_r$.

  It remains to check the controllability condition for $\Dd$. For
  components other than $q$ and $r$ this is obvious. 
  We have four cases to examine.

  First, let us take a state $(d_q,x)$ of $\Dd_r$. Suppose that
  $\pi(d_q,x)\act{b}s'_r$ is a local, uncontrollable transition in
  $\Aa_r$. We need to show that $(d_q,x)\act{b}(d_q,xb)$ is possible
  in $\Dd_r$. Since $(d_q,x)$ is a state of $\Dd_r$ we have
  $d_q\act{x}d'_q$ in $\red\Dd_q$. Moreover, $\red\pi(d'_q)$ is of the
  form $(s_q,a,s_r,f)$ and $\pi(d_q,x)=s_r$. We get that
  $(s_q,a,s_r,f)\act{b} (s_q,s'_r,f|_b)$ \anca{changed 2nd f in f'}exists in $\red\Aa_q$.  Since
  $\red\Dd$ satisfies the controllability condition, in $\red\Dd_q$ there must
  be a transition $d'_q\act{b}d''_q$ for some $d''_q$. Hence, by
  definition, $(d_q,x)\act{b}(d_q,xb)$ exists in $\Dd_r$.

  For the next case we take a state $d_q$ of $\Dd_q$ and suppose that
  $\pi(d_q)\act{b} s'_q$ is a local, uncontrollable transition in
  $\Aa_q$. We need to show that a $b$-transition is possible from
  $d_q$ in $\Dd_q$. We get $\red\pi(d_q)$ is of the form $(s_q,a,s_r,f)$, and
  $\pi(d_q)=s_q$. This means that the transition
  $(s_q,a,s_r,f)\act{b}(s'_q,s_r,f)$ is in $\red\Aa_q$. Since
  $\red\Dd$ is covering, we get $d_q\act{b}d'_q$ for some $d'_q$ in
  $\red\Dd_q$. But then $d_q\act{b}\ts(d'_q)$ in $\Dd_q$ by
  definition.

  The case of communication of $q$ with $p\not=r$ is similar to the
  above.

  The last case is a communication between $q$ and $r$. So take
  $(d^1_q,(d^2_q,x))$ and suppose
  $(\pi(d^1_q),\pi(d^2_q,x))\act{b}(s'_q,s'_r)$ in $\Aa$. We have that
  $\red\pi(d^1_q)$ is of the form $(s^1_q,a_1,s_r,f)$ and
  $\red\pi(d^2_q)$ is of the form $(s^2_q,a_2,s_r,f)$; the $s_r$- and
  $f$-components are the same by Property (*). Moreover,
  by definition $\pi(d^1_q)=s^1_q$ holds. Let $s^1_r=\pi(d^2_q,x)$, thus
  $s_r\act{x}s^1_r$. These observations allow us to
  obtain the following sequence of transitions in $\red\Aa$:
  \anca{changed 2nd f in f'}
  \begin{equation*}
    (s^1_q,a_1,s_r,f)\act{x}(s^1_q,a_1,s^1_r,f|_x)\act{b}(s'_q,s'_r)
  \end{equation*}
  Since $\red\Dd$ satisfies the controllability condition   we must
  have transitions $d^1_q\act{x}d'_q\act{b}d''_q$ in $\red\Dd$, 
  with $\red\pi(d''_q)=(s'_q,s'_r)$. This means that we have transition
  $(d^1_q,(d^2_q,x))\act{b}(\ts(d''_q),(\ts(d''_q),\e))$ in $\Dd$ and 
  $\pi(\ts(d''_q))=s'_q$, $\pi(\ts(d''_q),\e)=s'_r$. 
\end{proof}

As $\Dd$ is covering, to prove that $\Dd$ is correct we need to show
that all its maximal runs satisfy the correctness condition. For this
we will construct for every run of $\Dd$ a corresponding run of $\red
\Dd$. The following definition and lemma tells us that it is enough to
look at the runs of $\Dd$ of a special form. 
\begin{definition}[$\slow$]\label{def:slow}
  We define $\slow_r(\Dd)$ as the set of all 
  sequences labeling runs of $\Dd$ of the form $ y_0x_0a_1\cdots a_k
  y_k x_k a_{k+1}\dots$ or $y_0x_0a_1\cdots y_{k-1}x_{k-1} a_k x_k
  y_\omega$, 
where $a_i\in \S_q \cap \S_r$, $x_i\in (\S^{loc}_r)^*$,  $y_i\in
(\S \setminus \S_r)^*$, and $y_\omega\in
(\S \setminus \S_r)^\omega$
\end{definition}

\begin{lemma}\label{lemma:reduciton-to-slow}
 A covering controller $\Dd$ is correct for $\Aa$ iff
  for all $w\in\slow_r(\Dd)$, $\run(w)$ satisfies the correctness condition
  inherited from $\Aa$.
\end{lemma}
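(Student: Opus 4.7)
The plan is to use Mazurkiewicz trace theory on the distributed alphabet of $\Dd$. The key observation is that a local $r$-action $b\in\Sloc_r$ is independent of every action $a$ with $r\notin\dom(a)$, and that in a Zielonka automaton any two sequences that differ only by a swap of adjacent independent actions label the same global behaviour: in particular, $\run_p(w)=\run_p(w')$ for every process $p$, so $w$ and $w'$ satisfy exactly the same local correctness conditions. Based on this I will show that every maximal run of $\Dd$ is Mazurkiewicz-equivalent to a run whose labeling lies in $\slow_r(\Dd)$, which immediately yields the non-trivial direction.

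First I would formalize the normalisation step. Given a maximal run of $\Dd$ labeled by $w$, I push every maximal block of local $r$-actions as late as possible, using only swaps past adjacent non-$r$ actions. If the run contains infinitely many $q$--$r$ synchronisations $a_1,a_2,\dots$, each local $r$-action can be moved to sit right before the next $a_i$, producing a normal form $y_0 x_0 a_1 y_1 x_1 a_2\dots$ of the first type in Definition~\ref{def:slow}. If instead only finitely many such $a_1,\dots,a_k$ occur, then after $a_k$ the remaining local $r$-actions form a finite block $x_k$ (finiteness is immediate here since $\Aa$, and hence $\Dd$, is $r$-short, so from any state $r$ can perform only boundedly many local actions before blocking or synchronising with $q$); the rest of the tail uses no $r$-action at all, giving the second form $y_0 x_0 a_1\cdots a_k y_k x_k y_\omega$. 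The normalised sequence $w'$ labels the same global run of $\Dd$ as $w$, so $w'\in\slow_r(\Dd)$.

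The backward direction (``slow-correct implies correct'') now follows directly: given a maximal run of $\Aa\times\Dd$ labeled by $w$, we produce its normal form $w'\in\slow_r(\Dd)$; since the projections on every process coincide for $w$ and $w'$, the hypothesis applied to $w'$ transfers correctness to $w$. For the forward direction I would argue that any $w\in\slow_r(\Dd)$ actually labels a maximal run of $\Dd$, so correctness of $\Dd$ yields what we need; maximality uses the shape of $w$: between two consecutive $q$--$r$ communications all non-$r$ actions have already been performed (the $y_i$-block), then all pending local $r$-actions (the $x_i$-block), so no ``left-over'' action of a process $p$ with only finitely many occurrences could still be enabled, and in the second form the infinite tail $y_\omega$ together with the terminated $r$-part ensures the same.

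The main obstacle is the first step, namely verifying that the Mazurkiewicz reordering always lands in the precise syntactic form of Definition~\ref{def:slow}, in particular that the blocks of local $r$-actions inserted between consecutive $q$--$r$ communications are finite. This is exactly where the earlier reduction to an $r$-short plant (Theorem~\ref{thm:short}) is used: $r$-shortness bounds the length of any sequence of local $r$-actions between two synchronisations with $q$, so the reordering is well-defined even in runs with infinitely many $q$--$r$ communications. The rest of the argument is routine bookkeeping using the independence of $\Sloc_r$ from $\S\setminus\S_r$ in a Zielonka automaton.
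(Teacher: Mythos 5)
Your proposal takes essentially the same route as the paper: rewrite any (maximal) run into the $\slow_r$ normal form by repeatedly commuting adjacent independent actions, use $r$-shortness of $\Aa$ (inherited by the covering $\Dd$) to guarantee the local $r$-blocks $x_i$ are finite, and observe that per-process projections $\run_p$ are invariant under such commutations, so the local correctness conditions transfer. The only small over-claim is your assertion that every $w\in\slow_r(\Dd)$ labels a \emph{maximal} run (it need not); but the paper reads the lemma as quantifying over maximal runs and treats that direction as immediate, so this does not affect the argument.
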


\begin{proof}
  Observe first $\Dd$ is
  $r$-short, since $\Aa$ is $r$-short and $\Dd$ is covering. Thus
  every sequence labeling some run of $\Dd$  either has
  finitely many $r$-actions or infinitely many communications of $r$
  with $q$.  

  Secondly, note that every sequence $w$ labeling
  some run of $\Dd$ can be rewritten into a sequence $w'$ from
  $\slow_r(\Dd)$ by repeatedly replacing factors $ab$ by $ba$, if
  $\loc(a) \cap \loc(b)=\es$. We have that $\run(w')$ is also
  defined and $\run_p(w)=\run_p(w')$ for every
  process $p$. Therefore for
  correctness it will be enough to reason on sequences from
  $\slow_r(\Dd)$.
\end{proof}

For every sequence $w\in\slow_r(\Dd)$ as in Definition~\ref{def:slow} we
define the sequence $\chi(w) \in (\red\S)^\infty$ by induction on the
length of $w$. Let $ \chi(\e)=\ch(f_0)\ch(a_0)$, where $f_0$ and $a_0$
are determined by
    the initial $q$-state of $\red\Dd$.  For $w \in \S^*, b\in \S$ let
\begin{equation*}
  \chi(wb)=
\begin{cases}
\chi(w)b & \text{if $b\not\in\S_q$}\\
\chi(w)b\ch(a) & \text{if $b\in\S_q \setminus \S_r$}\\
\chi(w)b\ch(f)\ch(a) & \text{if $b\in\S_{q}\cap\S_r$.}
\end{cases}
\end{equation*}
where $a$ and $f$ are determined by the state reached by $\red\Dd$ on
$\chi(w)b$.   The next lemma
implies the correctness of the construction, and at the same time
confirms that the above definition makes sense, that is, the needed
runs of $\red\Dd$ are defined.



\begin{lemma}\label{lemma:D invariant}
  For every sequence $w\in \slow_r(\Dd)$ we have that
  $\red\run(\chi(w))$ is defined. If $w$ is finite then the states
  reached by $\Dd$ on $w$ and by $\red\Dd$ on $\chi(w)$ satisfy the following:
  \begin{enumerate}
  \item $\state_p(w)=\red\state_p(\chi(w))$ for every
  $p\not=q,r$.
\item Let $w=y_0x_0a_1 \cdots a_ky_kx_k$, where $a_i\in \S_q \cap
  \S_r$, $x_i\in (\S^{loc}_r)^*$, and $y_i\in (\S \setminus
  \S_r)^*$. Then $\state_r(w)=(d_q,x_k)$ and $\state_q(w)=d'_q$, where
  $d_q=\red\state_q(\chi(y_0x_0a_1 \cdots a_k))$ and
  $d'_q=\red\state_q(\chi(y_0x_0a_1 \cdots a_ky_k))$.
  \end{enumerate}
\end{lemma}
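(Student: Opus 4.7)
The plan is to prove both parts simultaneously by induction on the length of $w$. The base case $w=\e$ is immediate from the definitions: $\chi(\e)=\ch(f_0)\ch(a_0)$ drives $\red\Dd_q$ from its initial state through the unique $\ch(f_0)$ and $\ch(a_0)$ transitions into the true state $d^1_q$ (the initial state of $\Dd_q$), and the initial $r$-state $(d^1_q,\e)$ of $\Dd$ matches the formula of item (2) with $k=0$ and $y_0=x_0=\e$. For all other $p$, the initial states of $\Dd_p$ and $\red\Dd_p$ coincide.

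For the inductive step I take $wb\in\slow_r(\Dd)$ and proceed by case analysis on $b$. The unifying idea is that $\chi$ is engineered so that after every $q$-action it inserts exactly the $\ch(\cdot)$ prefix needed to re-enter a true state; hence the invariant ``$\red\state_q(\chi(w))$ is a true state that coincides with $\state_q(w)$ whenever $x_k=\e$'' is preserved. The three easy cases are: \emph{(a)} $b\notin\S_q\cup\S_r$ or $b\in\S_p$ for some $p\neq q,r$, where $\chi(wb)=\chi(w)b$ and the claim follows directly from the IH and the fact that $\Dd_p=\red\Dd_p$ for such $p$; \emph{(b)} $b\in\Sloc_r$, where the $\Dd_r$-transition $(d_q,x_k)\act{b}(d_q,x_kb)$ is defined precisely when $d_q\act{x_kb}$ in $\red\Dd_q$, which holds since $wb$ labels a run of $\Dd$; \emph{(c)} $b\in\S_q\setminus\S_r$ (local $q$-action or communication with some $p\neq r$), where $\chi(wb)=\chi(w)b\ch(a)$, and the $\Dd_q$-transition $d_q\act{b}\ts(d'_q)$ is by definition the composition $d_q\act{b}d'_q\act{\ch(a)}\ts(d'_q)$ in $\red\Dd_q$, which lifts cleanly to the communication variant using the IH for the partner process.

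The delicate case is $b\in\S_q\cap\S_r$, a communication between $q$ and $r$. Here the $\Dd$-transition $(d^1_q,(d^2_q,x_k))\act{b}(\ts(d''_q),(\ts(d''_q),\e))$ is predicated on the existence of $d^1_q\act{x_k}d'_q\act{b}d''_q$ in $\red\Dd_q$, while $\chi(wb)=\chi(w)b\ch(f)\ch(a)$ only appends one letter $b$ followed by the two $\ch$-actions to $\chi(w)$. The main obstacle is reconciling these: I must argue that the state $d'_q:=\red\state_q(\chi(w))$ reached after processing $w$ in $\red\Dd_q$ is exactly the endpoint of the $x_k$-path starting at $d^1_q=\red\state_q(\chi(y_0 x_0 a_1\cdots a_k y_k))$. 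This follows from the IH applied to the prefix $y_0x_0a_1\cdots a_ky_k$ (which gives $\state_q=d^1_q$ and $\state_r=(d^{(k)},\e)$, where $d^{(k)}=\red\state_q(\chi(y_0x_0a_1\cdots a_k))$) combined with the observation that appending the local $r$-actions $x_k$ to $w$ leaves $\state_q$ untouched but advances $\red\state_q(\chi(\cdot))$ along $x_k$, since $x_k$ is copied verbatim into $\chi$. Then $b\ch(f)\ch(a)$ in $\red\Dd_q$ drives $d'_q\act{b}d''_q\act{\ch(f)\ch(a)}\ts(d''_q)$, so $\red\state_q(\chi(wb))=\ts(d''_q)=\state_q(wb)$, and the new decomposition of $wb$ has $y_{k+1}=x_{k+1}=\e$, giving $\state_r(wb)=(\ts(d''_q),\e)$ as required.

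Beyond Case 4, the only real technicality is a bookkeeping lemma that the new $\chi(wb)$ continues to label a run of $\red\Dd$, which is exactly what must be checked when introducing the $\ch(f)$ and $\ch(a)$ letters; here one uses that from any non-true state reached in $\red\Dd_q$ there is a unique controllable transition available (observed in Section~\ref{sec:D} preceding the definition of $\ts$), so the $\ch(\cdot)$ letters appended by $\chi$ correspond to genuine $\red\Dd$-transitions. Together with the invariant propagated from the IH, this closes the induction.
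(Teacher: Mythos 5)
Your induction has the same skeleton as the paper's proof (same case split on the last letter $b$, same reconciliation in the $q$--$r$ communication case), but there is a genuine gap in your case \emph{(b)}, $b\in\Sloc_r$ — and it is precisely the case you dismiss as easy. The existence of the $\Dd_r$-transition $(d_q,x')\act{b}(d_q,x'b)$ only tells you that $d_q\act{x'b}$ in $\red\Dd_q$, where $d_q=\red\state_q(\chi(y_0x_0a_1\cdots a_k))$ is the state stored in $\Dd_r$, i.e.\ the state of $\red\Dd_q$ \emph{before} the block $y_k$ was processed. But to conclude that $\red\run(\chi(w)b)$ is defined you must extend the run of $\red\Dd_q$ from $\red\state_q(\chi(w))$, which is the state reached from $d'_q=\red\state_q(\chi(y_0x_0a_1\cdots a_ky_k))$ — the \emph{post-}$y_k$ state — by $x'$. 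These are different states of $\red\Dd_q$, and nothing you say bridges them. The paper closes this gap by invoking Property (*) (the $s_r$- and $f$-components of $\red\pi(d_q)$ and $\red\pi(d'_q)$ coincide, because $y_k$ contains no $r$-actions) together with the fact that $\red\Dd$ is covering and that all $\S_r$-actions are uncontrollable in $\red\Aa$: since $b$ is enabled after $x'$ from $\red\pi(d_q)$ and the relevant components of $\red\pi(d'_q)$ agree, $b$ is enabled in $\red\Aa$ after $x'$ from $\red\pi(d'_q)$, hence the covering controller $\red\Dd$ cannot block it. Your closing paragraph explicitly asserts that "the only real technicality" beyond Case 4 is the insertion of the $\ch(f)$ and $\ch(a)$ letters, which is exactly backwards: those insertions are routine (unique controllable successor from a non-true state), whereas the definedness of $\chi(w)b$ for a local $r$-action is where the covering hypothesis and Property (*) are actually needed.

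The rest is essentially the paper's argument: your base case, your cases \emph{(a)} and \emph{(c)}, and your Case 4 reconciliation (that $\red\state_q(\chi(w))$ is the endpoint of the $x_k$-path from the post-$y_k$ state, so the precondition $d^1_q\act{x_k}d'_q\act{b}d''_q$ of the $\Dd$-transition lines up with appending $b\ch(f)\ch(a)$ to $\chi(w)$) all match the paper. Once you patch case \emph{(b)} with the Property (*) argument, the proof is complete.
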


\begin{proof}
  Induction on the length of $w=y_0x_0a_1 \cdots a_ky_kx_k$. If $w=\e$
  then $\state_q(\e)=d^1_q$ and $\state_r(\e)=(d^1_q,\e)$ where $d^0_q
  \act{\ch(f_0) \ch(a)} d^1_q$ in $\Dd_q$, which shows the claim. Let
  $w=w'b$. If $b \notin (\S_q\cup \S_r)$, then $x_k=\e$, $y_k=y'b$,
  $\chi(w'b)=\chi(w')b$,
  $\state_q(w'b)=\state_q(w')\stackrel{ind.}{=}\red\state_q(\chi(y_0x_0a_1
  \cdots a_ky'))=\red\state_q(\chi(y_0x_0a_1 \cdots
  a_ky')b)$. Moreover,
  $\state_r(w'b)=\state_r(w')\stackrel{ind.}{=}(d_q,\e)$, where
  $d_q=\red\state_q(\chi(y_0x_0a_1 \cdots a_k))$. Finally, assuming
  that $\red\run(\chi(w'))$ defined, observe that this run can be extended by a
  $b$-transition since it can be in $w$ and the concerned states are
  the same.

We consider the remaining cases:
\begin{enumerate}
\item Let $b \in \Sloc_r$, then $\chi(w'b)=\chi(w')b$ and $x_k=x'b$. We have
  $\state_q(w'b)=\state_q(w')\stackrel{ind.}{=}\red\state_q(\chi(y_0x_0a_1
  \cdots y_k))=: d'_q$.
 Moreover, $\state_r(w')=(d_q,x')$, where
 $d_q=\red\state_q(\chi(y_0x_0a_1 \cdots a_k))$. In
 $\Dd_r$ there is a transition $(d_q,x') \act{b} (d_q,x'b)$, which
 shows the claim about states. Finally we justify that the run on $\chi(w')$ 
 in $\red\Dd$ can be extended by a $b$.  We know that $d_q \act{x'b}$
 and $d'_q \act{x'}$
 in $\red\Dd$, and want to show that $d'_q \act{x'b}$. This holds
 since $\red\Dd$ is covering and since Property (*) guarantees that
 the $s_r$ and $f$ components of $\red\pi(d_q)$ and $\red\pi(d'_q)$ are
 the same.
\item Let $b \in \S_q \setminus \S_r$, so $b$ is either local on $q$
  or a communication with $p \not= q,r$. We have $x_k=\e$ and
  $y_k=y'b$. Assume that $b$ is local on $q$. We have
  $\chi(w)=\chi(w')b \ch(a)$, where $a \in \Sloc_q$ and $d_q$ are such
  that $d_q=\red\state_q(\chi(w'))$ and $d_q \act{b} d^1_q
  \act{\ch(a)}d^2_q$ in $\red\Dd$. By induction,
  $\state_q(w')=\red\state_q(\chi(y_0x_0a_1 \cdots a_ky'))=d_q$, and
  by definition of $\Dd_q$, $d_q \act{b} d^2_q=\ts(d^1_q)$.  Thus
  $\state_q(w)=d^2_q=\red\state_q(\chi(w))$ and the claim about states
  is shown. The run  on $\chi(w)$ in $\red\Dd$ exists by the definition
  of $\chi(w)$ from $\chi(w')$.

The case of a communication with $p\not=r$ is similar to the above.
\item Let $b \in \S_q \cap \S_r$ be a communication between $q$ and
  $r$, thus $a_k=b$ and $x_k=y_k=\e$. We have $\chi(w)=\chi(w') b
  \ch(f) \ch(a)$, where $a,f$ are such that
  $\red\state_q(\chi(w'))=d_q \act{b} d^1_q \act{\ch(f)\ch(a)}
  d^2_q$. Consider $d'_q=\red\state_q(\chi(y_0x_0a_1 \cdots a_{k-1}))$ and
  $d''_q=\red\state_q(\chi(y_0x_0a_1 \cdots a_{k-1}y_{k-1}))$. By
  induction, $\state_q(w')=d''_q$ and $\state_r(w')=(d'_q,x_{k-1})$.
  In $\Dd$ we have a transition $(d''_q,(d'_q,x_{k-1})) \act{b}
  (d^2_q,(d^2_q,\e))$ since $d''_q \act{x_{k-1}} d_q \act{b} d^1_q
  \act{\ch(f)\ch(a)} d^2_q$ in $\red\Dd$. Thus,
  $\state_q(w)=d^2_q=\red\state_q(\chi(w))$ and
  $\state_r(w)=(d^2_q,\e)=(\red\state_q(\chi(w)),\e)$, which shows the
  claim about states. The run on $\chi(w)$ in $\red\Dd$ exists by the
  definition of $\chi(w)$ from $\chi(w')$.
\end{enumerate}
\end{proof}

\begin{lemma}\label{D: maximal}
  If $w \in \slow_r(\Dd)$ and $\run(w)$ is maximal in $\Dd$, then
  $\run(\chi(w))$ is maximal in $\red \Dd$.
\end{lemma}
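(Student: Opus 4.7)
The plan is to argue by contraposition: assuming $\red\run(\chi(w))$ is not maximal in $\red\Dd$, produce an insertion in $w$ that refutes maximality of $\run(w)$ in $\Dd$. So suppose $\chi(w)=uv$ and $a\in\red\S$ satisfy $\red\loc(v)\cap\red\loc(a)=\es$ and $\red\run(uav)$ is defined.

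First I would rule out $a$ being a $\ch$-action: $\ch(f)$ or $\ch(a')$ is enabled in $\red\Dd_q$ only at a non-true $q$-state, and such states arise exactly when $u$ ends strictly inside a $\chi$-block. In that case the very next symbol of $\chi(w)$ after $u$ is itself a $\ch$-action (the one $\chi$ is about to insert), so $q\in\red\loc(v)$, contradicting $\red\loc(\ch(\cdot))=\set{q}$. Hence $a\in\S$.

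Next I would reduce to $u=\chi(w_u)$ for some prefix $w_u$ of $w$. If $u=\chi(w_u)\gamma$ with $\gamma$ a nonempty proper prefix of the next $\chi$-block, then the first symbol of $v$ is a $\ch$-action, so $q\in\red\loc(v)$ forces $a$ not to involve $q$. Since $\gamma$ is a sequence of $q$-local actions only, $a$ and $\gamma$ are independent in $\red\Aa$, so I can commute them and replace the decomposition by $u'=\chi(w_u)$ and $v'=\gamma v$, preserving both the definedness of $\red\run(u'av')$ and the $\red\loc$-disjointness.

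Finally, with $u=\chi(w_u)$ and $w'':=w/w_u$, I would show that $w_u \cdot a \cdot w''$ is a valid run of $\Dd$, contradicting maximality. Disjointness $\loc(a)\cap\loc(w'')=\es$ follows from $\red\loc(v)\cap\red\loc(a)=\es$: $\red\loc$ and $\loc$ coincide outside $\set{q,r}$, while every action involving $r$ in $\Aa$ has $\red\loc=\set{q}$; a short case analysis then shows $q,r\notin\loc(w'')\cap\loc(a)$. For enablement $\state(w_u)\act{a}$ in $\Dd$, I would split on the type of $a$: actions not touching $q,r$ transfer directly from $\red\Dd_p=\Dd_p$ via Lemma~\ref{lemma:D invariant}; local-$q$ actions and $q$-$p$ communications with $p\neq r$ are inherited by $\Dd_q$ from $\red\Dd_q$ at the true state $\red\state_q(\chi(w_u))$; for $a\in\Sloc_r$ and $a\in\S_q\cap\S_r$, Property~(*) and the fact that $\chi(y_k)$ does not affect the $s_r$- and $f$-components of the true state let me play the pending $x_k$ stored in $\state_r(w_u)=(d_q',x_k)$ from $\state_q(w_u)=d_q''$ in $\red\Dd_q$, landing at $\red\state_q(\chi(w_u))$ where $a$ is enabled by assumption---exactly the condition the rules for local-$r$ and $q$-$r$ transitions in $\Dd$ require. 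The main obstacle is this last step, reconciling the decoupled view of $q$ and $r$ in $\Dd$ with the interleaved $\chi$-processing in $\red\Dd$, which is where the bookkeeping of Lemma~\ref{lemma:D invariant} and Property~(*) does the heavy lifting.
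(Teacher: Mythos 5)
Your proposal follows the same route as the paper's proof: contraposition, normalization to prefixes of $\chi(w)$ of the form $\chi(u)$ for a prefix $u$ of $w$, exclusion of $\ch$-actions via the covering property, and a case split on the type of the inserted action handled with Lemma~\ref{lemma:D invariant} and Property~(*). Your final step is essentially the paper's case analysis.

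There is, however, a genuine flaw in your normalization step. When $u=\chi(w_u)\gamma$ ends strictly inside a block, you assert that ``$\gamma$ is a sequence of $q$-local actions only'' and commute $a$ \emph{backward} past $\gamma$. This is false when the real action $b$ opening the block is a communication between $q$ and some $p\not=r$: then $\red\loc(b)=\set{q,p}$, and the only thing you know about $a$ is that $q\notin\red\loc(a)$ (because the first symbol of $v$ is a $\ch$-action). Nothing prevents $a$ from involving $p$, in which case $a$ and $b$ do not commute, and your replacement decomposition $u'=\chi(w_u)$, $v'=\gamma v$ breaks both the definedness of $\red\run(u'av')$ and the disjointness $\red\loc(v')\cap\red\loc(a)=\es$. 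The repair is to commute $a$ \emph{forward} instead: the remainder of the block consists solely of $\ch$-actions, which are $q$-local and hence independent of $a$, so $a$ can be pushed to the end of the block, landing at the prefix $\chi(w_u b)$ --- this is the reduction the paper tacitly uses. A smaller imprecision of the same flavor: for $a$ local to $q$ or a $q$--$p$ communication you read off enablement ``at the true state $\red\state_q(\chi(w_u))$'', but by Lemma~\ref{lemma:D invariant} the controller $\Dd_q$ actually sits at the state reached \emph{before} the pending $x_k$; you still need the observation that local $r$-moves do not change the $s_q$-component, together with controllability of $\red\Dd$, to pull the enabled transition back --- the same mechanism you correctly invoke for the $r$-cases.
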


\begin{proof}
  Recall first that $\run(\chi(w))$ is not maximal only if for some
  finite prefix $x$ of $\chi(w)$, $\run(x)$ can be extended by some
  action $a$ (and the processes in $\loc(a)$ do not appear anymore in
  the remaining suffix of $\chi(w)$). From the definition of $\chi(w)$
  it follows that it suffices to consider prefixes of $\chi(w)$
  of the form $\chi(u)$, where $w=uv$ with $u$ finite. By
  Lemma \ref{lemma:D invariant} we note first that such an $a$ cannot be on
  processes other than $q$ or $r$, since
  $\state_p(u)=\red\state_p(\chi(u))$ for all $p \not= q,r$.

We consider the remaining cases, and assume $u=y_0x_0a_1 \cdots a_k y_kx_k$:

\begin{enumerate}
\item Assume that $\chi(u)$ can be extended by some $b \in\Sloc_r$ in
  $\red\Dd$, and let $d_q=\red\state_q(\chi(y_0x_0a_1 \cdots a_k))$,
  $d'_q=\red\state_q(\chi(y_0x_0a_1 \cdots a_ky_k))$, so $d'_q
  \act{x_kb}$ in $\red\Dd_q$.  By
  Lemma~\ref{lemma:D invariant} we have $\state_r(u)=(d_q,x_k)$ and by
  Property (*), the $s_r$- and $f$-components of $\red\pi(d_q)$ and
  $\red\pi(d'_q)$ are the same. Since $\red\Dd$ is covering, this
  means that $d_q \act{x_kb}$, hence there is a run on $u bv$ in $\Dd$
  so $w$ was not maximal.
\item Assume that $\chi(u)$ can be extended by some $b \in \S_q
  \setminus \S_r$ and recall from Lemma~\ref{lemma:D invariant} that
  $\state_q(u)= d_q$, where $d_q=\red\state_q(\chi(y_0x_0a_1 \cdots
  a_ky_k))$. Consider 
  $u_1=y_0x_0a_1 \cdots a_ky_kb$ and assume that $b$ is $q$-local (the
  case of a communication with $p \not=r$ is similar). We have $d_q
  \act{x_k} d'_q \act{b}$ in $\red\Dd$ from some $d'_q$, and we want
  to show that $d_q 
  \act{b} d''_q$ for some $d''_q$. But this holds since $\red\Dd$ is
  covering and the $s_q$ components of $\red\pi(d_q)$ and
  $\red\pi(d'_q)$ are the same. So the run of $w$ in $\Dd_q$ was not
  maximal, since there is a run on $ubv$ in $\Dd$. 
\item Assume that $\chi(u)$ can be extended by some $b \in \S_q \cap
  \S_r$. Recall from Lemma~\ref{lemma:D invariant} that
  $\state_q(u)= d'_q$ and $\state_r(u)=(d_q, x_k)$, where
  $d_q=\red\state_q(\chi(y_0x_0a_1 \cdots a_k))$ and
  $d'_q=\red\state_q(\chi(y_0x_0a_1 \cdots a_ky_k))$. We have that
  $\red\state_q(\chi(u))=d^1_q$ where $d'_q \act{x_k} d^1_q$, and
  $d^1_q \act{b} d^2_q$. According to the definition of $\Dd$, there is a
  transition $(d'_q,(d_q,x_k)) \act{b} (\ts(d^2_q),(\ts(d^2_q),\e))$
  in $\Dd$, so that the run on $w$ was not maximal.
\end{enumerate}
Note that a run on $\chi(u)$
cannot be extended by actions of the form $\ch(a)$ or $\ch(f)$, since
$\red\Dd$ is covering. So the above four cases exhaust all the possibilities.
\end{proof}

\begin{lemma}\label{lemma:D correct}
 If $\red\Dd$ is a correct covering controller for $\red\Aa$, then $\Dd$
  is a correct covering controller for $\Aa$.
\end{lemma}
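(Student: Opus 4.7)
The plan is to exploit the chain of lemmas already established. By Lemma~\ref{lem:Dcovering}, $\Dd$ is covering, so it suffices to check that every maximal run of $\Dd$ satisfies the inherited correctness condition of $\Aa$. By Lemma~\ref{lemma:reduciton-to-slow} we may restrict to maximal runs labeled by some $w \in \slow_r(\Dd)$. For such $w$, Lemma~\ref{lemma:D invariant} and Lemma~\ref{D: maximal} together tell us that $\chi(w)$ labels a maximal run of $\red\Dd$, so correctness of $\red\Dd$ yields that $\red\run(\chi(w))$ satisfies $\red\Cor$. What remains is to transfer this correctness back to $\run(w)$ in $\Aa$ process by process.

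For any process $p \neq q, r$ the two runs project identically on $\Aa_p$-transitions and $\red\Cor_p = \Cor_p$, so $\Cor_p$ is immediate. For process $q$, the $\Aa_q$-transitions of $\red\run(\chi(w))$ are precisely those of $\run_q(w)$ (the inserted $\ch(\cdot)$ actions do not come from $\Aa_q$), and the $\Aa_q$-part of $\red\Cor_q$ is exactly $\Cor_q$; the finite-run terminal case is handled by the definition of $\red T_q$, which forces $s_q \in T_q$ and $s_r \in T_r$ whenever $\red\run(\chi(w))$ terminates.

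The heart of the argument lies in process $r$. If $w$ carries infinitely many $r$-actions, then clause~1(i) of $\red\Cor_q$ forces the (identical) $r$-projection to satisfy $\Cor_r$. Otherwise, decompose $w = y_0 x_0 a_1 \cdots a_k y_k x_k y_\omega$ with $y_\omega$ containing no $r$-actions (or $w$ finite). Beyond the last communication $a_k$, the $s_r$- and $f$-components recorded inside $\red\state_q(\chi(\cdot))$ stabilize, so clause~1(ii) of $\red\Cor_q$ guarantees that every maximal local $r$-play from this $s_r$ respecting $f$ ends in $T_r$. Because $w$ is maximal in $\Dd$ and $\Dd$ covers $\Aa$, no $r$-action is enabled in $\Aa$ from $\state_r(w)$, hence $x_k$ itself is one of these maximal $f$-respecting plays and its endpoint $\state_r(w)$ must lie in $T_r$, as required.

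The main obstacle is exactly this last configuration, where process $q$ may diverge on $y_\omega$ while process $r$ freezes: the controller $\Dd$ cannot force progress in $r$, and the correctness of $r$ must be certified from the $q$-side of the reduction. This is where the $r$-short assumption, which keeps all candidate local $r$-plays finite, and clause~1(ii) of the tailored condition $\red\Cor_q$, which quantifies over all such plays, come together; combined with the fact that $\Dd$ is covering (so $r$-enabledness in $\Dd$ matches that in $\Aa$), they close the per-process transfer and yield correctness of $\Dd$ for $\Aa$.
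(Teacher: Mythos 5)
Your proposal follows essentially the same route as the paper's own proof: reduce to $\slow_r(\Dd)$ via Lemma~\ref{lemma:reduciton-to-slow}, map $w$ to a maximal run of $\red\Dd$ on $\chi(w)$ via Lemmas~\ref{lemma:D invariant} and~\ref{D: maximal}, and transfer $\red\Cor$ back to $\Cor$ process by process, with the only delicate case being the finite $r$-projection handled through clause~1(ii) of $\red\Cor_q$. Your treatment of that last case is in fact slightly more explicit than the paper's (which disposes of it in one sentence); the only nuance worth tightening is that maximality of $w$ plus covering rules out only the \emph{$f$-allowed} (in particular uncontrollable) $r$-actions at $\state_r(w)$, not all $r$-actions of $\Aa$, but that weaker fact is exactly what is needed to invoke clause~1(ii).
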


\begin{proof}
  By Lemma~\ref{lemma:reduciton-to-slow} it is enough to show that 
  for all $w\in\slow_r(\Dd)$, $\run(w)$ satisfies $\Cor$.
  By Lemmas~\ref{lemma:D invariant} and \ref{D: maximal} the run on
  $\chi(w)$ exists and is 
  maximal. Since $\red\Dd$ is correct this run satisfies $\red\Cor$.

  Consider a maximal run in $\Dd$, labeled by some
  $w\in\slow_r(\Dd)$. It is of one of the forms
  \begin{equation*}
    y_0x_0a_1\cdots a_k y_k x_k a_{k+1}\dots\quad\text{or}\quad
    y_0x_0a_1\cdots a_k x_k y_\omega
  \end{equation*}
  where $a_i\in \S_q \cap \S_r$, $x_i\in (\S^{loc}_r)^*$, $y_i\in (\S
  \setminus \S_r)^*$, and $y_\omega\in (\S \setminus \S_r)^\omega$

  By Lemma~\ref{lemma:D invariant} $\run_p(w)$ and
  $\red\run_p(\chi(w))$ are the same for $p\not=q,r$. Since for such
  $p$ also the
  correctness conditions of $\Aa$ and $\red\Aa$ are the same, and since
  $\red\run_p(\chi(w))$ satisfies $\red\Cor_p$, so does
  $\run_p(w)$. 

  Considering $\run_q(w)$, Lemma~\ref{lemma:D invariant} gives us
  \[\state_q(y_0x_0a_1\cdots a_ky_k)=\state_q(y_0x_0a_1\cdots
  a_ky_kx_k)=\red\state_q(\chi(y_0x_0a_1\cdots a_ky_k))\]
 for every
  $k$. Moreover, the $\Aa_q$-component does not change when going from
  $\red\pi(\chi(y_0x_0a_1\cdots a_ky_k))$ to
  $\red\pi(\chi(y_0x_0a_1\cdots a_ky_kx_k))$. Thus,
  $\pi(\run_q(w))$ is equal to the projection on $\Aa_q$ of $\red\pi(\run_q(\chi(w)))$, so  $\run_q(w)$ satisfies
  $\Cor_q$.

  It remains to consider $\run_r(w)$. For this we can use Lemma~\ref{lemma:D
    invariant} obtaining
 $\state_r(y_0x_0a_1 \cdots a_ky_kx_k)=(d_q,x_k)$ with
  $d_q=\red\state_q(\chi(y_0x_0a_1 \cdots a_k))$, for every
  $k$. Recall that $\pi(d_q,x)$ was defined as the $\Aa_r$-component
  of $\red\pi(d'_q)$, where $d_q \act{x} d'_q$ in $\red\Dd$. Assume
  first that $w$ is of the form $y_0x_0a_1\cdots a_k y_k x_k
  a_{k+1}\dots$. Observe that $\pi(\run_r(w))$ is equal to the
  projection on $\Aa_r$ of $\red\pi(\red\run_q(\chi(w)))$, thus
  $\run_r(w)$ satisfies $\Cor_r$ because $\red\run_r(\chi(w))$ satisfies
  $\Cor_r$. Let now $w$ be of the form $y_0x_0a_1\cdots a_k x_k
  y_\omega$. Since $\run(w)$ is maximal we have that $\state_r(w)\in
  T_r$, again because $\red\run_r(\chi(w))$ satisfies $\Cor_r$. 
\end{proof}

\section{Conclusion}

We have considered a model obtained by instantiating Zielonka automata
into the supervisory control framework of Ramadge and
Wonham~\cite{RW89}. The result is a distributed synthesis framework that
is both expressive and decidable in interesting cases. To
substantiate we have sketched how to encode threaded boolean programs
with compare-and-swap instructions. Our main decidability result
(Theorem~\ref{thm:main}) shows that the synthesis problem is decidable
for hierarchical architectures and for all
local omega-regular specifications. Recall that in the Pnueli and Rosner
setting essentially only pipeline architectures are decidable, with
an additional restriction that only the first and the last process in
the pipeline can handle environment inputs. In our case all the
process can interact with the environment.

The synthesis procedure presented here is in $k$-\EXPTIME\ for
architectures of depth 
$k$, in particular it is \EXPTIME\ for the case of a one server
communicating with clients who do not communicate between each
other. From~\cite{GGMW13} we know that these bounds are tight.

This paper essentially closes the case of tree architectures
introduced in~\cite{GGMW13}. The long standing open question is the
decidability of the synthesis problem for all
architectures~\cite{GLZ04}. 





\newpage

\end{document}